\documentclass[journal,10pt]{IEEEtran}
\usepackage[utf8]{inputenc}
\usepackage{amsmath,amsthm,amssymb,algorithm,tikz}
\usepackage{algpseudocode}
\usepackage{enumitem,epsfig}
\usepackage{adjustbox}
\usepackage{graphicx}
\usepackage{rotating}
\usepackage{multirow}
\usepackage{subcaption}
\usepackage{multicol}
\usepackage{float}
\usepackage{hyperref}
\usepackage{wrapfig}
\usepackage{xcolor}

\usetikzlibrary{positioning}
\usetikzlibrary{shapes,arrows}

\title{Multi-Branch Tensor Network Structure for Tensor-Train Discriminant Analysis}
\author{\textit{Seyyid Emre Sofuoglu, Selin Aviyente}\thanks{\noindent This work was in part supported by NSF CCF-1615489.}
\thanks{\noindent S. E. Sofuoglu and S. Aviyente are with Electrical and Computer Engineering, Michigan State University, East Lansing, MI, 48824, USA.
E-mails: sofuoglu@msu.edu,
aviyente@egr.msu.edu}}

\DeclareMathOperator*{\argmin}{argmin}

\theoremstyle{definition}
\newtheorem{lemma}{Lemma}

\newcommand{\vast}{\bBigg@{4}}
\newcommand{\Biggg}{\bBigg@{5}}

\def\tdt{\times\dots\times}

\def\x{\mathbf{x}}

\def\L{\mathbf{L}}
\def\R{\mathbf{R}}
\def\A{\mathcal{A}}
\def\B{\mathcal{B}}
\def\C{\mathcal{C}}
\def\D{\mathcal{D}}
\def\M{\mathcal{M}}
\def\T{\mathbf{T}}

\def\U{\mathcal{U}}
\def\S{\mathcal{S}}

\def\V{\mathbf{V}}
\def\X{\mathcal{X}}
\def\Y{\mathcal{Y}}

\def\ldist{3}
\def\rdist{3}
\def\bc{1}

\begin{document}

\maketitle
\thispagestyle{empty}
\pagestyle{empty}
\begin{abstract}
    Higher-order data with high dimensionality arise in a diverse set of application areas such as computer vision, video analytics and medical imaging. Tensors provide a natural tool for representing these types of data. Two major challenges that confound current tensor based supervised learning algorithms are storage complexity and computational efficiency. In this paper, we address these problems  by employing tensor-trains, a hierarchical tensor network structure that parameterizes large-scale multidimensional data via a network of low-rank tensors. First, we introduce a supervised discriminative subspace learning approach based on the tensor-train model, referred to as tensor-train discriminant analysis (TTDA).  We then introduce a multi-branch tensor network structure for efficient implementation of TTDA. The multi-branch approach takes advantage of the flexibility of the tensor network structure by reordering the low-rank projection and core tensors to reduce both storage and computational complexity. Multi-branch implementations of TTDA are shown to achieve lower storage and computational complexity while providing improved classification performance with respect to both Tucker based and existing tensor-train based supervised learning methods. 
\end{abstract}
\providecommand{\keywords}[1]{\textbf{\textit{Index terms---}} #1}
\begin{keywords}
Tensor-Train Decomposition, Tensor Networks, Multidimensional Discriminant Analysis, Supervised Tensor-Train Analysis.
\end{keywords}
\section{Introduction}
\noindent Most real-world data is multidimensional, i.e. it is a function of several independent variables, and typically represented by a multidimensional array of numbers. These arrays are often referred to as tensors \cite{de2000multilinear}. For instance, a color image is a third-order tensor
defined by two indices for spatial variables and one index
for color mode.  Similarly, a video comprised of color images is a
fourth-order tensor, time being the fourth dimension besides spatial and spectral.  


Recently, tensors have received attention in machine learning community, where given a collection of training tensors  $\Y \in \mathbb{R}^{I_1\times\dots\times I_N\times K\times C}$ from $C$ classes each with $K$ samples, the goal is to extract low-dimensional features for subsequent classification tasks. Vectorizing high dimensional inputs may result in poor classification performance due to overfitting when the training sample size is relatively small compared to the feature vector dimension \cite{sidiropoulos2017tensor,li2014multilinear,tao2005supervised,tao2007general,yan2006multilinear}. For this reason, a variety of supervised tensor learning methods for feature extraction, selection, regression and classification have been proposed \cite{lu2008mpca,tao2007general,kotsia2012higher,hao2013linear,li2014multilinear,guo2016support}. Most of the existing work has utilized Tucker decomposition. However, for larger tensors, Tucker representation can be exponential in storage requirements \cite{wang2019principal,chaghazardi2017sample}.

 In order to address this issue of exponential storage requirements and high computational complexity, in this paper, we introduce a supervised subspace learning approach based on the tensor-train (TT) structure. In particular, we present a discriminant subspace learning approach using the TT model, namely the Tensor-Train Discriminant Analysis (TTDA). The proposed approach is based on linear discriminant analysis (LDA) and learns a tensor-train subspace  (TT-subspace) \cite{wang2018tensor,wang2019principal} that maximizes the linear discriminant function. Although this approach  provides an efficient structure for storing the learnt subspaces, it is computationally prohibitive. For this reason, we propose a multi-branch tensor network structure and develop computationally efficient, low storage complexity implementations of TTDA. 

\subsection{Related Work}
The proposed work builds on two fundamental lines of research: 1) Linear supervised learning methods for tensors and 2) Tensor-Train subspace learning methods. In the area of supervised tensor learning, methods to learn discriminant subspaces from a set of labelled training examples have been proposed. These include extensions of Linear Discriminant Analysis (LDA) to Multilinear Discriminant Analysis (MDA) for face and gait recognition \cite{yan2006multilinear,tao2007general,li2014multilinear}; Discriminant Non-negative Tensor Factorization (DNTF) \cite{zafeiriou2009discriminant}; Supervised Tensor Learning (STL) where one projection vector along each mode of a tensor is learnt \cite{tao2005supervised,he2014dusk}. More recently, the linear regression model has been extended to tensors to learn multilinear mappings from a tensorial input space to a continuous output space \cite{guo2011tensor,yu2016learning}. Finally, a framework for tensor-based linear large margin classification was formulated as Support Tensor Machines (STMs), in which the parameters defining the separating hyperplane form a tensor \cite{kotsia2012higher,hao2013linear,guo2016support}. However, almost all of these methods are based on  Tucker decomposition. For large tensors, these representations are computationally expensive and their storage requirements grow exponentially \cite{cichocki2017tensor,chaghazardi2017sample}.

In \cite{holtz2012manifolds,cichocki2016tensor}, it was shown that tensor-train representation can address these shortcomings.  Tensor networks are factorizations of very large tensors into networks of smaller tensors with applications in applied mathematics, physics and machine learning \cite{orus2014practical}.  The matrix product state (MPS) or tensor-train is one of the best understood tensor networks for which  efficient algorithms have been developed \cite{oseledets2011tensor,oseledets2010approximation}. TT is a special case of a tensor network where a tensor with $N$ indices is factorized into a chain-like product of low-rank, three-mode tensors. This model provides better compression than Tucker models, especially for higher order tensors \cite{cichocki2014era}. Even though early applications of TT decomposition focused on compression and dimensionality reduction \cite{oseledets2010approximation,oseledets2011tensor}, more recently TT has been used in machine learning applications. In \cite{bengua2017matrix}, MPS is implemented in an unsupervised manner to first compress the tensor of training samples and then the resulting lower dimensional core tensors are used as features for subsequent classification. In \cite{wang2019principal}, TT decomposition is associated with a structured subspace model, namely the tensor-train subspace. Learning this structured subspace from training data is posed as a non-convex problem referred to as TT-PCA. Once the subspaces are learnt from the training data, the resulting low-dimensional subspaces are used to project and classify the test data. \cite{wang2018tensor} extends TT-PCA to manifold learning by proposing a tensor-train neighborhood preserving embedding (TTNPE). The classification is conducted by first learning a set of tensor subspaces from the training data and then projecting the training and testing data onto the learnt subspaces. Apart from employing TT for subspace learning, recent work has also considered the use of TT in classifier design. In \cite{chen2019support}, a support tensor train machine (STTM) is introduced to replace the rank-1 weight tensor in Support Tensor Machine (STM) \cite{tao2005supervised} by a tensor-train that can approximate any tensor with a scalable number of parameters.

\subsection{Contributions of the Proposed Work}

 The contributions of the proposed work can be summarized as follows:
 \begin{itemize}
     \item This paper is the first that uses tensor-train decomposition to formulate LDA for supervised subspace learning. 
     Unlike recent work on TT-subspace learning \cite{bengua2017efficient,bengua2017matrix,wang2019principal,wang2018tensor} which focuses on dimensionality reduction for feature extraction, the proposed work learns discriminative TT-subspaces and uses them to extract features that will optimize the linear discriminant function.
     \item A computationally efficient way to implement tensor-train decomposition is presented. The proposed multi-branch structure is akin to a hybrid between tensor-train and Tucker decompositions using the flexibility of tensor networks.  This structure is not limited to LDA as it can also be utilized within other subspace learning tasks, e.g. PCA. A convergence analysis for the proposed algorithm to solve the resulting non-convex optimization problem is also provided.
     \item A theoretical analysis of storage and computational complexity of this new framework is presented. A method to find the optimal implementation of the multi-branch TT model given the dimensions of the input tensor is also given.
     \item The proposed method provides higher classification accuracy at a reduced storage complexity and reduces the computational complexity by a factor of $10^2$ especially at high compression ratios compared to Tucker based supervised learning methods. Moreover, the proposed method is able to learn more discriminative subspaces from a small number of training samples compared to MDA.
 \end{itemize}

The rest of the paper is organized as follows. In Section \ref{sec:back}, we provide background on tensor operations, TT and Tucker decomposition, LDA and MDA. In Section \ref{sec:ttda}, we introduce an optimization problem to learn the TT-subspace structure that maximizes the linear discriminant function. In Section \ref{sec:mbttda}, we introduce multi-branch implementations of TTDA to address the issue of high computational complexity. In Section \ref{sec:compC}, we provide an analysis of storage cost, computational complexity and convergence for the proposed algorithms. We also provide a procedure to determine the optimal TT structure for minimizing storage complexity. In Section \ref{sec:Exp}, we compare the proposed methods with state-of-the-art tensor based discriminant analysis and subspace learning methods for classification applications.

\section{Background}
\label{sec:back}
\noindent Let $\Y \in \mathbb{R}^{I_1\times\dots\times I_N\times K\times C}$ be the collection of samples of training tensors. For a given $\Y$ with $C$ classes and $K$ samples per class, define $\Y_c^k\in \mathbb{R}^{I_1 \times I_2 \times\dots\times I_N }$ as the sample tensors where $c\in\{1,\dots,C\}$ is the class index and $k\in\{1,\dots,K\}$ is the sample index.
\subsection{Notation}
\noindent \textbf{Definition 1.} (Vectorization, Matricization and Reshaping) $\V(.)$ is a vectorization operator such that $\V(\Y_c^k)\in \mathbb{R}^{I_1I_2\dots I_N\times1}$. $\mathbf{T}_n(.)$ is a tensor-to-matrix reshaping operator defined as $\mathbf{T}_n(\Y_c^k)\in \mathbb{R}^{I_1\dots I_n\times I_{n+1}\dots I_N}$ and the inverse operator is denoted as $\T_n^{-1}(.)$.

\noindent \textbf{Definition 2.} (Left and right unfolding) The left unfolding operator creates a matrix from a tensor by taking all modes except the last mode as row indices and the last mode as column indices, i.e. $\mathbf{L}(\Y_c^k) \in \mathbb{R}^{I_1I_2\dots I_{N-1}\times I_N}$ which is equivalent to $\T_{N-1}(\Y_c^k)$. Right unfolding transforms a tensor to a matrix by taking all the first mode fibers as column vectors, i.e. $\mathbf{R}(\Y_c^k) \in \mathbb{R}^{I_1\times I_2I_3\dots I_N}$ which is equivalent to $\T_1(\Y_c^k)$. The inverse of these operators are denoted as $\L^{-1}(.)$ and $\R^{-1}(.)$, respectively.

\noindent \textbf{Definition 3.} (Tensor trace) Tensor trace is applied on matrix slices of a tensor and contracts them to scalars. Let $\A \in \mathbb{R}^{I_1\times I_2\times \dots \times I_N}$ with $I_{k'}=I_k$, then trace operation on modes $k'$ and $k$ is defined as:
\begin{gather}
    \D=tr_{k'}^k(\A)=\sum_{\substack{i_{k'}=i_k=1}}^{I_{k}}\A(:,\dots, i_{k'},:,\ldots, i_k,:,\dots, :),\nonumber
\end{gather}
where $\D \in \mathbb{R}^{I_1\tdt I_{k'-1} \times I_{k'+1} \tdt I_{k-1} \times I_{k+1}\tdt I_N}$ is a $N-2$-mode tensor.

\noindent \textbf{Definition 4.} (Tensor Merging Product) Tensor merging product connects two tensors along some given sets of modes. For two tensors $\A\in \mathbb{R}^{I_1\times I_2\times\dots\times I_N}$ and $\B\in \mathbb{R}^{J_1\times J_2\times\dots\times J_M}$ where $I_n=J_m$ and $I_{n+1}=J_{m-1}$ for some $n$ and $m$, tensor merging product is given by \cite{cichocki2017tensor}:
\begin{equation}
\C=\A\times_{n,n+1}^{m,m-1}\B. \nonumber
\end{equation}
$\C\in\mathbb{R}^{I_1\times\dots \times I_{n-1}\times I_{n+2}\times \dots \times I_N\times J_1\times\dots\times J_{m-2}\times J_{m+1}\times\dots\times J_M}$ is a $(N+M-4)$-mode tensor that is calculated as:
\small
\begin{gather}
\C(i_1,\dots , i_{n-1}, i_{n+2}, \dots , i_N, j_1,\dots, j_{m-2}, j_{m+1},\dots, j_M)= \nonumber \\ 
\sum_{t_1=1}^{I_n}\sum_{t_2=1}^{J_{m-1}}\big[\A(i_1,\dots,i_{n-1},i_n=t_1,i_{n+1}=t_2,i_{n+1},\dots,i_N) \nonumber \\ \B(j_1,\dots,j_{m-2},j_{m-1}=t_2,j_m=t_1,j_{m+1},\dots,j_M)\big].\nonumber
\end{gather}
\normalsize
A graphical representation of tensors $\A$ and $\B$ and the tensor merging product defined above is given in Fig. \ref{fig:tmp}. 

A special case of the tensor merging product can be considered for the case where $I_n=J_m$ for all  $n,m\in \{1,\dots,N-1\}, M\geq N$. In this case, the tensor merging product across the first $N-1$ modes is defined as:
\begin{gather}
    \C'=\A\times_{1,\dots,N-1}^{1,\dots,N-1}\B,
    \label{eq:cprime}
\end{gather}
where $\C' \in \mathbb{R}^{I_N\times J_N \times\dots\times J_M}$. This can equivalently be written as:
\begin{gather}
    \R(\C')=\L(\A)^\top\T_{N-1}(\B),
    \label{eq:cprime2}
\end{gather}
where $\R(\C')\in \mathbb{R}^{I_N\times\prod_{m=N}^MJ_m}$.

\def\ab{.5}
    \tikzset{
      net node/.style = {circle, minimum width=2*\ab cm, inner sep=0pt, outer sep=0pt, outer color=gray!50!cyan, inner color=cyan},
      net connect/.style = {line width=1pt, draw=blue!50!cyan!25!black},
      net thick connect/.style = {net connect, line width=2.5pt},
	  second node/.style = {circle, minimum width=2*\ab cm, inner sep=0pt, outer sep=0pt, outer color=green!25!gray!40!yellow, inner color=green!25!gray!40!yellow},
      second connect/.style = {line width=1pt, draw=red!60!gray},
      second thick connect/.style = {net connect, line width=2.5pt}
    }

\begin{figure}
        \begin{subfigure}[b]{.45\columnwidth}
        \centering
        \scalebox{0.7}{
            \begin{tikzpicture}
                \foreach \i in {1,...,2}{
                    \path (225+\i*45:1.5) node (b\i) {$I_\i$};}
                  \path (225:1.5) node (b3) {$I_{N}$};
                  \path (180:1.5) node (b4) {$I_{N-1}$};
                  \path (90:1.5) node (b5) {$\dots$};
                  
                  \path (0:0) node (b6) [second node] {$\A$};
                  \node [rotate=120] at (45: 0.7) {$\dots$};
                  \node [rotate=60] at (135: 0.7) {$\dots$};
                  
                  \foreach \i in {1,...,5}{
                  \path [second connect] (b\i) -- (b6);;}
            \end{tikzpicture}
        }
        \caption{}
        \end{subfigure}
        \begin{subfigure}[b]{.45\columnwidth}
        \centering
        \scalebox{.7}{%
            \begin{tikzpicture}
                \foreach \i in {1,...,2}{
                    \path (225+\i*45:1.5) node (b\i) {$J_\i$};}
                  \path (225:1.5) node (b3) {$J_{M}$};
                  \path (90:1.5) node (b4) {$\dots$};
                  
                  \path (0:0) node (b6) [second node] {$\B$};
                  \node [rotate=120] at (45: 0.7) {$\dots$};
                  \node [rotate=60] at (135: 0.7) {$\dots$};
                  
                  \foreach \i in {1,...,4}{
                  \path [second connect] (b\i) -- (b6);;}
            \end{tikzpicture}
            
         }
         \caption{}
        \end{subfigure}
        
        \begin{subfigure}[b]{.98\columnwidth}
        \centering
        \scalebox{.7}{%
            \begin{tikzpicture}
                  \path (270:1.5) node (a1) {$I_{N}$};
                  \path (225:1.5) node (a3) {$I_{1}$};
                  \path (180:1.5) node (a4) {$I_{2}$};
                  \path (90:1.5) node (a5) {$\dots$};
                  
                  \path (0:0) node (a6) [second node] {$\A$};
                  \node [rotate=120] at (60: 0.8) {$\dots$};
                  \node [rotate=-120] at (-60: 0.8) {$\dots$};
                  \node [rotate=60] at (135: 0.7) {$\dots$};
                  
                  \foreach \i in {1,3,4,5}{
                  \path [second connect] (a\i) -- (a6);;}
                  \tikzset{every node/.append style={xshift=3cm}}
                \foreach \i in {1}{
                    \path (270+\i*45:1.5) node (b\i) {$J_\i$};}
                  \path (270:1.5) node (b3) {$J_{M}$};
                  \path (90:1.5) node (b4) {$\dots$};
                  
                  \path (0:0) node (b6) [second node] {$\B$};
                  \node [rotate=120] at (45: 0.7) {$\dots$};
                  \node [rotate=60] at (120: 0.8) {$\dots$};
                  \node [rotate=-60] at (-120: 0.8) {$\dots$};
                  
                  \foreach \i in {1,3,4}{
                  \path [second connect] (b\i) -- (b6);;}

                  \draw [second connect] (a6) to[out=20, in=160] (b6);
                  \path [second connect] (a6) to[out=-20, in=-160] (b6); 
                  \node at (-1.5, 0.7) {$I_n,J_m$};
                  \node at (-1.5, -0.7) {$I_{n+1},J_{m-1}$};
                  
                  \path (0:1.5) node {\bf\large{=}};
                  
                  \foreach \i in {1}
                    \path (260+\i*20:1.5) node (c\i) [right=3.5cm] {$I_{\i}$};
                  \path (260:1.5) node (c2) [right=3.5cm] {$J_{M}$};
                  \path (-10:1.5) node (c3) [right=3.5cm] {$I_{n-1}$};
                  \path (10:1.5) node (c4) [right=3.5cm] {$I_{n+2}$};
                  \path (80:1.5) node (c5) [right=3.5cm] {$I_N$};
                  \path (100:1.5) node (c6) [right=3.5cm] {$J_1$};
                  \path (170:1.5) node (c7) [right=3.5cm] {$J_{m-2}$};
                  \path (190:1.5) node (c8) [right=3.5cm] {$J_{m+1}$};
                  \path (0:0) node (c10) [second node] [right=3.4cm] {$\C$};
                  \node [rotate=120] at (20: 0.7) [above right=0.05cm and 4cm] {$\dots$};
                  \node [rotate=60] at (170: 0.7) [above right=0.05cm and 4cm] {$\dots$};
                  \node [rotate=-120] at (-20: 0.5) [above right=0.05cm and 4cm] {$\dots$};
                  \node [rotate=-60] at (-160: 1) [above right=0.05cm and 4cm] {$\dots$};
                  
                  \foreach \i in {1,...,8}{
                  \path [second connect] (c\i) -- (c10);;}
              \end{tikzpicture}
        }
        \caption{}
        \end{subfigure}
    \caption{Illustration of tensors and tensor merging product using tensor network notations. Each node represents a tensor and each edge represents a mode of the tensor. (a) Tensor $\A$, (b) Tensor $\B$, (c) Tensor Merging Product between modes $(n,m)$ and $(n+1,m-1)$.}
    \label{fig:tmp}
\end{figure}
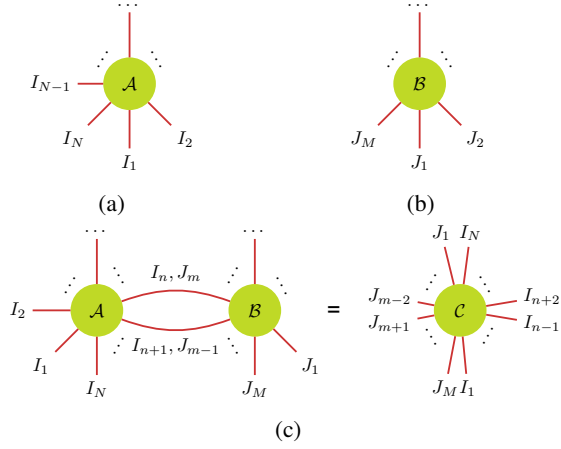%
    
\noindent \textbf{Definition 5.} (Tensor-Train Decomposition (TT)) Tensor-train decomposition represents each element of $\Y_c^k$ using a series of matrix products as:
\begin{gather}
\Y_c^k(i_1,i_2,\dots,i_N)= \nonumber \\  \U_1(1,i_1,:)\U_2(:,i_2,:) \dots \U_N(:,i_N,:)\mathbf{x}_c^k,
\label{eq:ttsc}
\end{gather}
where $\U_n\in \mathbb{R}^{R_{n-1}\times I_n \times R_n}$ are the three mode low-rank tensor factors, $R_n<I_n$ are the TT-ranks of the corresponding modes $n\in\{1,\dots,N\}$ and $\mathbf{x}_c^k\in \mathbb{R}^{R_N\times 1}$ is the projected sample vector. Using tensor merging product form, (\ref{eq:ttsc}) can be rewritten as
\begin{equation}
    \Y_c^k=\U_1\times_3^1\U_2\times_3^1\dots \times_3^1\U_N\times_3^1\mathbf{x}_c^k.
    \label{eq:proj}
\end{equation} 
A graphical representation of (\ref{eq:proj}) can be seen in Fig. \ref{fig:ttd}. If $\Y_c^k$ is vectorized, another equivalent expression for (\ref{eq:ttsc}) in terms of matrix projection is obtained as:
\begin{equation}
    \V(\Y_c^k)=\mathbf{L}(\U_1\times_3^1\U_2\times_3^1\dots \times_3^1\U_N)\mathbf{x}_c^k.\nonumber
\end{equation} 

Let $U=\L(\U_1\times_3^1\U_2\times_3^1\dots \times_3^1\U_N)$ where $U\in \mathbb{R}^{I_1I_2\dots I_N\times R_N}$. When $\L(\U_n)$s are left orthogonal, $U$ is also left orthogonal \cite{holtz2012manifolds}, i.e. $\L(\U_n)^\top\L(\U_n)=\mathbb{I}_{R_{n-1}I_n}, \forall n$ implies $U^\top U =\mathbb{I}_{I}, I=\prod_{n=1}^NI_n$ where $\mathbb{I}_{I}\in \mathbb{R}^{I\times I}$ is the identity matrix. 

\def\ab{.7}
\tikzset{
  net node/.style = {circle, minimum width=2*\ab cm, inner sep=0pt, outer sep=0pt, outer color=green!25!gray!50!cyan, inner color=green!25!gray!50!cyan},
  net connect/.style = {line width=1pt, draw=blue!50!cyan!25!black},
  net thick connect/.style = {net connect, line width=2.5pt},
  second node/.style = {circle, minimum width=2*\ab cm, inner sep=0pt, outer sep=0pt, outer color=green!25!gray!40!yellow, inner color=green!25!gray!40!yellow},
  second connect/.style = {line width=1pt, draw=red!60!gray},
  second thick connect/.style = {net connect, line width=2.5pt},
  third connect/.style = {line width=1pt, draw=green},
  third thick connect/.style = {net connect, line width=2.5pt},
  third node/.style = {circle, minimum width=2*\ab cm, inner sep=0pt, outer sep=0pt, outer color=green!25!gray!40!red, inner color=green!25!gray!40!red}
}

\begin{figure}
    \centering
    \scalebox{0.47}{
    \Large
    \begin{tikzpicture}
          \foreach \i in {1,...,2}
            \path (225+\i*45:2) node (c\i) [left=7cm] {$I_{\i}$};
          \path (225:2) node (c3) [left=7cm] {$I_{N}$};
          \path (180:2) node (c4) [left=7cm] {$I_{N-1}$};
          \path (90:2) node (c5) [left=7cm] {$\dots$};
          \path (0:0) node (c6) [second node] [left=6.8cm] {$\Y_c^k$};
          \node [rotate=120] at (45: 1) [above left=0.4cm and 7.3cm] {$\dots$};
          \node [rotate=60] at (135: 1) [above left=0.25cm and 7.3cm] {$\dots$};
          \path (180:6) node {\bf\large{=}};
          \foreach \i in {1,2}{
            \path (180:6-2*\i) node (n\i) [net node] {$\U_{\i}$};
            \path (180:6-2*\i) node (b\i) [below=2cm] {$I_{\i}$};
            }
          \path (180:0) node (n3) {$\dots$};
          \path (0:2) node (n4) [net node] {$\U_{N-1}$};
          \path (0:2) node (b4) [below=2cm] {$I_{N-1}$};
          \path (0:4) node (n5) [net node] {$\U_{N}$};
          \path (0:5.5) node (n6) [second node] {$\x_{c}^k$};
          \path (0:4) node (b5) [below=2cm] {$I_{N}$};
          \foreach \i in {9,10}{
            \path (180:16-2*\i) node (n\i) [below=2cm] {};}
          \foreach \i in {1,...,5}{
          \path [second connect] (c\i) -- (c6);;}
          \path [net connect] (n1) -- (n2) -- (n3)  -- (n4) -- (n5) -- (n6);;
          \path [second connect] (n1) --  (b1)  (n2)  -- (b2)  (n4)-- (b4) (n5) --(b5);;
          \draw [net connect] (-5.5,0) node[anchor=south]{$R_0=1$} -- (n1);;
    \end{tikzpicture}
    \normalsize
    }
    \caption{Tensor-Train Decomposition of $\Y_c^k$ using tensor merging products.}
    \label{fig:ttd}
\end{figure}
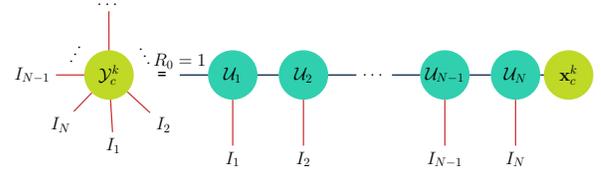

\noindent \textbf{Definition 6.} (Tucker Decomposition (TD)) If the number of modes of the projected samples, $\X_c^k$, is equal to the number of modes of the input tensors $\Y_c^k$, the TT-model becomes equivalent to Tucker decomposition. In this case, $\X_c^k$ is known as the core tensor. This is shown in Fig. \ref{fig:tckr} and given by:

\begin{equation}
    \Y_c^k=\X_c^k \times_1^2 U_1 \times_2^2 U_2 \dots \times_N^2 U_N,\nonumber
\end{equation}
where $U_n\in \mathbb{R}^{I_n\times R_n}$ and $\X_c^k\in \mathbb{R}^{R_1\times R_2\times\dots \times R_N}$.

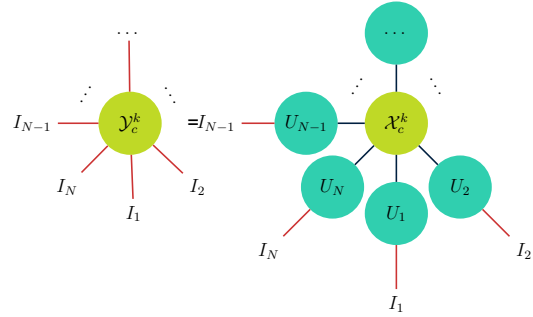
\begin{figure}
    \centering
    \scalebox{0.60}{
    \large
    \begin{tikzpicture}
        \foreach \i in {1,...,2}
            \path (225+\i*45:2) node (c\i) [left=5.5cm] {$I_{\i}$};
          \path (225:2) node (c3) [left=5.5cm] {$I_{N}$};
          \path (180:2) node (c4) [left=5.5cm] {$I_{N-1}$};
          \path (90:2) node (c5) [left=5.5cm] {$\dots$};
          \path (0:0) node (c6) [second node] [left=5.2cm] {$\Y_c^k$};
          \node [rotate=120] at (45: 1) [above left=0.4cm and 5.8cm] {$\dots$};
          \node [rotate=60] at (135: 1) [above left=0.25cm and 5.8cm] {$\dots$};
          
          \foreach \i in {1,...,5}{
          \path [second connect] (c\i) -- (c6);;}
          
          \path (180:4.5) node {\bf\large{=}};
        \foreach \i in {1,...,2}{
            \path (225+\i*45:2) node (b\i) [net node] {$U_{\i}$};
            \path (225+\i*45:4) node (n\i)  {$I_{\i}$};}
          \path (225:2) node (b3) [net node] {$U_{N}$};
          \path (225:4) node (n3)  {$I_{N}$};
          \path (180:2) node (b4) [net node] {$U_{N-1}$};
          \path (180:4) node (n4)  {$I_{N-1}$};
          \path (90:2) node (b5) [net node] {$\dots$};
          
          \path (0:0) node (b6) [second node] {$\X_c^k$};
          \node [rotate=120] at (45: 1.2) {$\dots$};
          \node [rotate=60] at (135: 1.2) {$\dots$};
          
          \foreach \i in {1,...,4}{
          \path [net connect] (b\i) -- (b6);;
          \path [second connect] (n\i) -- (b\i);;}
          \path [net connect] (b5) -- (b6);;
          
    \end{tikzpicture}
    \normalsize
    }
    \caption{Tensor network notation for Tucker decomposition.}
    \label{fig:tckr}
\end{figure}

\vspace{-1em}
\subsection{Linear Discriminant Analysis (LDA)} 

LDA for vectorized tensor data finds an orthogonal projection $U$ that maximizes the discriminability of projections \footnote{The original formulation optimizes the trace ratio. Prior work showed the equivalence of trace ratio to trace difference used in this paper \cite{fukunaga2013introduction}.}:
\begin{gather}
U=\argmin_{\hat{U}}\left[tr(\hat{U}^\top S_W\hat{U})-\lambda tr(\hat{U}^\top S_B\hat{U})\right]=\nonumber \\\argmin_{\hat{U}}tr(\hat{U}^\top (S_W-\lambda S_B)\hat{U})=\argmin_{\hat{U}}tr(\hat{U}^\top S \hat{U}),
\label{eq:LDA}
\end{gather}
where $S=S_{W}-\lambda S_{B}$, $\lambda$ is the regularization parameter that controls the trade-off between $S_W$ and $S_B$ which are within-class and between-class scatter matrices, respectively, defined as:
\begin{gather}
    S_W=\sum_{c=1}^C\sum_{k=1}^K \V(\Y_c^k-\M_c)\V(\Y_c^k-\M_c)^\top, \label{eq:LDAscat} \\
    S_B=\sum_{c=1}^C\sum_{k=1}^K \V(\M_c-\M)\V(\M_c-\M)^\top, 
    \label{eq:LDAscat2}
\end{gather}
where $\M_c=\frac{1}{K}\sum_{k=1}^K\Y_c^k$ is the mean for each class $c$ and $\M=\frac{1}{CK}\sum_{c=1}^C\sum_{k=1}^K\Y_c^k$ is the total mean of all samples. Since $U$ is an orthogonal projection, (\ref{eq:LDA}) is equivalent to minimizing the within-class scatter and maximizing the between class scatter of projections. This can be solved by the matrix $U\in \mathbb{R}^{\prod_{n=1}^NI_n\times R_N}$ whose columns are the eigenvectors of $S\in \mathbb{R}^{\prod_{n=1}^NI_n\times \prod_{n=1}^NI_n}$ corresponding to the lowest $R_N$ eigenvalues.

\subsection{Multilinear Discriminant Analysis (MDA)} 
MDA extends LDA to tensors using TD by finding a subspace $U_n\in \mathbb{R}^{I_n\times R_n}$ for each mode $n\in\{1,\dots,N\}$ that maximizes the discriminability along that mode \cite{li2014multilinear,tao2007general, yan2005discriminant}. When the number of modes $N$ is equal to 1, MDA is equivalent to LDA. In the case of MDA, within-class scatter along each mode $n \in \{1,\dots, N\}$ is defined as:
\begin{gather}
S_W^{(n)}=\sum_{c=1}^C\sum_{k=1}^{K_c} \left[(\mathcal{Y}_c^k-\M_c)\prod_{\substack{m \in \{1,\dots,N\} \\ m\neq n}}\times_m^1 {U_{m}} \right]_{(n)} \nonumber \\ {\left[(\mathcal{Y}_c^k-\M_c)\prod_{\substack{m \in \{1,\dots,N\} \\ m\neq n}}\times_m^1 {U_{m}} \right]_{(n)}}^\top.\label{eq:scatMDA}
\end{gather}
Between-class scatter $S_B^{(n)}$ is found in a similar manner. Using these definitions, each $U_n$ is found by optimizing \cite{tao2007general}:
\begin{gather}
    U_n=\argmin_{\hat{U}_n} tr(\hat{U}_n^\top (S_W^{(n)} - \lambda S_B^{(n)})\hat{U}_n).
    \label{eq:mdamoden}
\end{gather}
Different implementations of the multilinear discriminant analysis have been introduced including Discriminant Analysis with Tensor Representation (DATER), Direct Generalized Tensor Discriminant Analysis (DGTDA) and Constrained MDA (CMDA). DATER minimizes the ratio ${tr(U_n^\top S_W^{(n)}U_n)}/{tr(U_n^\top S_B^{(n)}U_n)}$ \cite{yan2005discriminant} instead of (\ref{eq:mdamoden}). Direct Generalized Tensor Discriminant Analysis (DGTDA), on the other hand, computes scatter matrices without projecting inputs on $U_m$, where $m\neq n$ and finds an optimal $U_n$\cite{li2014multilinear}. Constrained MDA (CMDA) finds the solution in an iterative fashion \cite{li2014multilinear}, where each subspace is found by fixing all other subspaces. 

\section{Tensor-Train Discriminant Analysis}
\label{sec:ttda}
When the data are higher order tensors, LDA needs to vectorize them and finds an optimal projection as shown in (\ref{eq:LDA}). This creates problems as the intrinsic structure of the data is destroyed. Even though MDA addresses this problem, it is inefficient in terms of storage complexity \cite{cichocki2017tensor, chaghazardi2017sample} as it relies on TD. Thus, we propose to solve (\ref{eq:LDA}) by constraining $U=\L(\U_1\times_3^1\U_2\times_3^1\dots \times_3^1\U_N)$ to be a TT-subspace to reduce the computational and storage complexity and obtain a solution that will preserve the inherent data structure. Consequently, the obtained $U$ will still provide discriminative features and will have a TT-subspace structure. 

The goal of TTDA is to learn left orthogonal tensor factors $\U_n \in \mathbb{R}^{R_{n-1}\times I_n \times R_{n}}, n\in \{1,\dots,N\}$ using TT-model such that the discriminability of projections $\x_c^k, \forall c, k$ is maximized. First,  $\U_n$s can be initialized using TT decomposition proposed in \cite{oseledets2011tensor}. To optimize $\U_n$s for discriminability, we need to solve (\ref{eq:LDA}) for each $\U_n$, which can be rewritten using the definition of $U$ as:
\begin{gather}
\U_n=\argmin_{\hat{\U}_n} tr\bigg[\L(\U_1\times_3^1\dots\times_3^1\hat{\U}_n\times_3^1\dots\times_3^1\U_N)^\top \nonumber \\ S \L(\U_1\times_3^1\dots\times_3^1\hat{\U}_n\times_3^1\dots\times_3^1\U_N)\bigg].
\label{eq:TTDA}
\end{gather}

Using the definitions presented in (\ref{eq:cprime}) and (\ref{eq:cprime2}), we can express (\ref{eq:TTDA})  in terms of tensor merging product:

\begin{gather}
    \U_n=\argmin_{\hat{\U}_n}  tr\bigg[(\U_1\times_3^1\dots \times_3^1\hat{\U}_n\times_3^1\dots\times_3^1\U_N) \times_{1,\dots,N}^{1,\dots,N} \S \nonumber \\ \times_{N+1,\dots,2N}^{1,\dots,N} (\U_1\times_3^1\dots\times_3^1\hat{\U}_n\times_3^1\dots\times_3^1\U_N)\bigg],
    \label{eq:ScatTT}
\end{gather}
where $\S=\T_N^{-1}(S)\in\mathbb{R}^{I_1\times\dots \times I_N\times I_1\times\dots \times I_N}$. Let $\U_{n-1}^L=\U_1\times_3^1\U_2\times_3^1\dots\times_3^1\U_{n-1}$ and $\U_{n}^R=\U_{n+1}\times_3^1\dots\times_3^1\U_N$. By rearranging the terms in (\ref{eq:ScatTT}), we can first compute all merging products and trace operations that do not involve $\U_n$ as:  
\begin{gather}
    \A_n=tr_4^8\Bigg[\U_{n-1}^L\times_{1,\dots,n-1}^{1,\dots,n-1}\bigg(\U_{n}^R\times_{1,\dots,N-n}^{n+1,\dots,N} \nonumber \\ \left(\U_{n-1}^L\times_{1,\dots,n-1}^{N+1,\dots,N+n-1}(\U_{n}^R\times_{1,\dots,N-n}^{N+n+2,\dots,2N}\S)\right)\bigg)\Bigg],
    \label{eq:a}
\end{gather}
where $\A_n \in \mathbb{R}^{R_{n-1}\times I_n\times R_{n}\times R_{n-1}\times I_n\times R_{n}}$ (refer to Fig. \ref{fig:Asupp} for a graphical representation of (\ref{eq:a})). Then, we can rewrite the optimization in terms of $\U_n$:
\begin{equation}
    \U_n=\argmin_{\hat{\U}_n}\left(\hat{\U}_n\times_{1,2,3}^{1,2,3}\left(\A_n\times_{4,5,6}^{1,2,3}\hat{\U}_n\right)\right).
    \label{eq:utau}
\end{equation}
\begin{figure}
    \centering
    \scalebox{0.5}{
    \Large
    \begin{tikzpicture}
      
      \path (0:0) node (n1) [third node] {$\S$};
      
      \path (180:\ldist) node (l1) [above=5*\bc cm] {$I_{1}$};
      \path (180:\ldist) node (l2) [above=3*\bc cm] {$I_{2}$};
      \path (180:\ldist) node (l3) [above=2*\bc cm] {$\dots$};
      \path (180:\ldist) node (l4) {$I_{n}$};
      \path (180:\ldist) node (l5) [below=2*\bc cm] {$\dots$};
      \path (180:\ldist) node (l6) [below=3*\bc cm] {$I_{N-1}$};
      \path (180:\ldist) node (l7) [below=5*\bc cm] {$I_{N}$};
      \foreach \i in {1,...,7}{
      \path [second connect] (l\i) -- (n1);}
      
      \path (0:\rdist) node (r1) [above=5*\bc cm] {$I_{1}$};
      \path (0:\rdist) node (r2) [above=3*\bc cm] {$I_{2}$};
      \path (0:\rdist) node (r3) [above=2*\bc cm] {$\dots$};
      \path (0:\rdist) node (r4) {$I_{n}$};
      \path (0:\rdist) node (r5) [below=2*\bc cm] {$\dots$};
      \path (0:\rdist) node (r6) [below=3*\bc cm] {$I_{N-1}$};
      \path (0:\rdist) node (r7) [below=5*\bc cm] {$I_{N}$};
      \foreach \i in {1,...,7}{
      \path [second connect] (r\i) -- (n1);}
      
      \path (180:2*\ldist) node (ul1) [above=2.7*\bc cm] [net node] {$\U_{n-1}^L$};
      \foreach \i in {1,...,3}{
      \path [second connect] (l\i) -- (ul1);}
      \path (180:2*\ldist) node (lr1) [above=1*\bc cm] {$R_{n-1}$};
      \path [net connect] (ul1) -- (lr1);
      
      \path (180:2*\ldist) node (ul2) [below=2.7*\bc cm] [net node] {$\U_{n}^R$};
      \foreach \i in {5,...,7}{
      \path [second connect] (l\i) -- (ul2);}
      \path (180:2*\ldist) node (lr2) [below=1*\bc cm] {$R_{n}$};
      \path (180:2*\ldist) node (lrn) [below=6*\bc cm] {$R_{N}$};
      \path [net connect] (lrn) -- (ul2) -- (lr2);
      
      \path (0:2*\ldist) node (ur1) [above=2.7*\bc cm] [net node] {$\U_{n-1}^L$};
      \foreach \i in {1,...,3}{
      \path [second connect] (r\i) -- (ur1);}
      \path (0:2*\ldist) node (rr1) [above=1*\bc cm] {$R_{n-1}$};
      \path [net connect] (ur1) -- (rr1);
      
      \path (0:2*\ldist) node (ur2) [below=2.7*\bc cm] [net node] {$\U_{n}^R$};
      \foreach \i in {5,...,7}{
      \path [second connect] (r\i) -- (ur2);}
      \path (0:2*\ldist) node (rr2) [below=1*\bc cm] {$R_{n}$};
      \path (0:2*\ldist) node (rrn) [below=6*\bc cm] {$R_{N}$};
      \path [net connect] (rrn) -- (ur2) -- (rr2);
      
      \path [third connect] (rrn) -- (lrn);
      
    \end{tikzpicture}
    \normalsize
    }
    \caption{Tensor $\A_n$ is formed by first merging $\U_n^R$, $\U_{n-1}^L$ and $\S$ and then applying trace operation across $4^{th}$ and $8^{th}$ modes of the resulting tensor. The green line at the bottom of the diagram refers to the trace operator. }
    \label{fig:Asupp}
\end{figure}
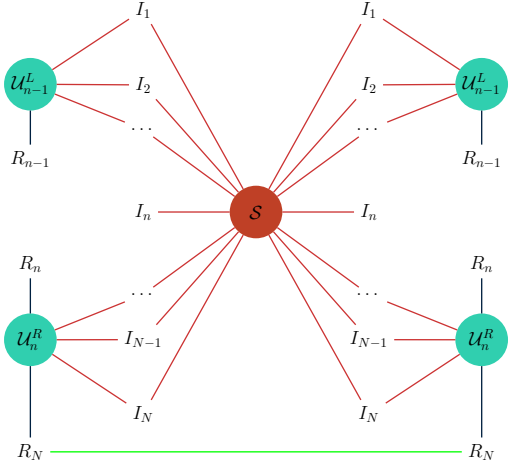
Let $A_n=\mathbf{T}_3(\A_n)\in\mathbb{R}^{R_{n-1} I_n R_{n}\times R_{n-1} I_n R_{n}}$, then (\ref{eq:utau}) can be rewritten as:
\begin{gather}
    \U_n=\argmin_{\hat{\U}_n} \V(\hat{\U}_n)^\top A_n \V(\hat{\U}_n), \nonumber \\ \L(\hat{\U}_n)^\top \L(\hat{\U}_n)=\mathbb{I}_{R_{n}}.
    \label{eq:TTDAU}
\end{gather}
This is a non-convex function due to unitary constraints and can be solved by the algorithm proposed in \cite{wen2013feasible}. The procedure described above  to find the subspaces is computationally expensive due to the complexity of finding each $\A_n$ \cite{wang2018tensor}.

When $n=N$, (\ref{eq:TTDAU}) does not apply as $\U_N^R$ is not defined and the trace operation is defined on the third mode of $\U_N$. To update $\U_N$, the following can be used:
\vspace{-2em}
\begin{gather}
    \U_N = \argmin_{\hat{\U}_N} tr\left(\hat{\U}_N\times_{1,2}^{1,2}\left(\A_N\times_{3,4}^{1,2}\hat{\U}_N\right)\right),\nonumber
\end{gather}
where $\A_N = \U_{N-1}^L\times_{1,\dots,N-1}^{1,\dots,N-1} \Big(\U_{N-1}^L\times_{1,\dots,N-1}^{N+1,\dots,2N-1}\S\Big)$. Once all of the $\U_{n}$s are obtained, they can be used to extract low-dimensional, discriminative features as $U^\top \mathbf{V}(\Y_c^k)$. The pseudocode for TTDA is given in Algorithm 1.

\algrenewcommand\algorithmicrequire{\textbf{Input:}}
\algrenewcommand\algorithmicensure{\textbf{Output:}}
\begin{algorithm}
\caption{Tensor-Train Discriminant Analysis (TTDA)}
\begin{algorithmic}[1]
\Require Input tensors $\Y_c^k \in \mathbb{R}^{I_1 \times I_2 \times \dots \times I_N }$ where $c \in \{1,\dots ,C\}$ and $k \in \{1,\dots,K\}$, initial tensor factors ${\U_n},  n \in \{1,\dots ,N\}$,  $\lambda$, $R_1,\dots,R_N$, $MaxIter$
\Ensure $\U_n, n \in \{1,\dots ,N\}$, and $\x_c^k,\quad \forall c,k$
\State $\S \gets \T_N^{-1}(S_W-\lambda S_B), \text{see eqns}. (\ref{eq:LDAscat}), (\ref{eq:LDAscat2})$.
\While{$iter<MaxIter$}
\For{$n=1$ : $N-1$}
\State Compute $\A_{n}$ using (\ref{eq:a}).
\State $\V(\U_n)\gets\hspace{-.3em}\smash{\displaystyle\argmin_{\substack{\hat{\U}_n, \\\L(\hat{\U}_n)^\top\L(\hat{\U}_n)=\mathbb{I}_{R_{n}}}}} \hspace{-.2em}\V(\hat{\U}_n)^\top \T_3(\A_n) \V(\hat{\U}_n)$.
\vspace{.7em}
\EndFor
\State $\A_{N} \gets \U_{N-1}^L\times_{1,\dots,N-1}^{1,\dots,N-1} \Big(\U_{N-1}^L\times_{1,\dots,N-1}^{N+1,\dots,2N-1}\S\Big)$
\State $\L(\U_N) \gets \hspace{-.7em}\smash{\displaystyle\argmin_{\substack{\hat{\U}_N,\\ \L(\hat{\U}_n)^\top\L(\hat{\U}_N)=\mathbb{I}_{R_{N}}}}} \hspace{-.7em}tr\Big(\L(\hat{\U}_N)^\top\T_2(\A_N)\L(\hat{\U}_N)\Big)$.
\vspace{1em}
\State $iter \gets iter+1$.
\EndWhile
\State $U=\L(\U_1\times_3^1\U_2\times_3^1\dots \times_3^1\U_N)$
\State $\x_c^k \gets U^\top \mathbf{V}(\Y_c^k)$, $\quad \forall c,k$.
\end{algorithmic}
\label{alg:TTDA}
\end{algorithm}
\vspace{-1em}
\section{Multi-Branch Tensor-Train Discriminant Analysis}
\label{sec:mbttda}
  TTDA algrorithm described above becomes computationally expensive as it requires the computation of tensor $\A_n$ through tensor merging products.  For this reason, in this section we introduce computationally efficient tensor network structures for TTDA. These new algorithms are inspired by prior work in tensor networks which considers the benefits of reshaping high-dimensional vector- and matrix-type data into tensors and then processing them using TT decomposition  \cite{cichocki2016tensor}. Several papers employed this idea to reshape matrices and vectors into tensors, known as ket augmentation and quantized TT (QTT), for better compression and higher computational efficiency \cite{wang2017efficient,wang2018tensor,cichocki2016tensor,bengua2017efficient, oseledets2010approximation,oseledets2009breaking}. 
  Inspired by this idea, we propose to tensorize the projected training samples rather than the original data in the learning framework. Using this structural approximation within TTDA formulation, we first propose to approximate 2D-LDA by TT and then generalize by increasing the number of modes (or branches) of the projected training samples.
\vspace{-.7em}
\subsection{Two-way Tensor-Train Discriminant Analysis (2WTTDA)} 
As LDA tries to find a subspace $U$ which maximizes discriminability for vector-type data, 2D-LDA tries to find two subspaces $V_1, V_2$ such that these subspaces maximize discriminability for matrix-type data \cite{ye2005two}. If one considers the matricized version of $\Y_c^k$ along mode $d$, i.e. $\T_d(\Y_c^k)\in \mathbb{R}^{\prod_{i=1}^dI_i\times \prod_{i=d+1}^NI_i}$, where $1<d<N$, the equivalent orthogonal projection can be written as:
\begin{gather}
    \T_d(\Y_c^k)=V_1X_c^kV_2^\top,
    \label{eq:2dlda}
\end{gather}
where $V_1 \!\in \! \mathbb{R}^{\prod_{i=1}^dI_i\times R_d}, V_2\! \in\! \mathbb{R}^{\prod_{i=d+1}^NI_i\times \hat{R}_d}$, $X_c^k\!\in\!\mathbb{R}^{R_d\times \hat{R}_d}$.

In TTDA, since the projections $\x_c^k$ are considered to be vectors, the subspace $U=\L(\U_1\times_3^1\U_2\times_3^1\dots \times_3^1\U_N)$ is analogous to the solution of LDA with the constraint that the subspace admits a TT model. If we consider the projections and the input samples as matrices, now we can impose a TT structure to the left and right subspaces analogous to 2D-LDA. In other words, one can find two sets of TT representations corresponding to $V_1$ and $V_2$ in (\ref{eq:2dlda}). Using this analogy, (\ref{eq:2dlda}) can be rewritten as:
\vspace{-.5em}
\begin{equation}
    \T_d(\Y_c^k)=\L(\U_1\times_3^1\dots\times_3^1 \U_d) X_c^k \R(\U_{d+1}\times_3^1\dots\times_3^1\U_N),
    \label{eq:projtw2}
\end{equation} 
which is equivalent to the following representation:
\begin{equation}
    \Y_c^k=\U_1\times_3^1\dots \times_3^1\U_d\times_3^1 X_c^k\times_2^1\U_{d+1}\times_3^1\dots\times_3^1\U_N.\nonumber
    \label{eq:projtw}
\end{equation} 
This formulation is graphically represented in Fig. \ref{fig:ttdtw} where the decomposition has two branches, thus we refer to it as Two-way Tensor-Train Decomposition (2WTT).

To maximize discriminability using 2WTT, an optimization scheme that alternates between the two sets of TT-subspaces can be utilized. When forming the scatter matrices for a set, projections of the data to the other set can be used instead of the full data which is similar to (\ref{eq:scatMDA}). This will reduce computational complexity as the cost of computing scatter matrices and the number of matrix multiplications to find $\A_n$ in (\ref{eq:a}) will decrease. We propose the procedure given in Algorithm \ref{alg:2WTTDA} to implement this approach and refer to it as Two-way Tensor-Train Discriminant Analysis (2WTTDA) as illustrated in Fig. \ref{fig:alg2w}. To determine the value of $d$ in (\ref{eq:projtw2}), we use a center of mass approach and find the $d$ that minimizes $|\prod_{i=1}^d I_i-\prod_{j=d+1}^N I_{j}|$. In this manner, the problem can be separated into two parts which have similar computational complexities. 

\algrenewcommand\algorithmicrequire{\textbf{Input:}}
\algrenewcommand\algorithmicensure{\textbf{Output:}}
\begin{algorithm}
\caption{Two-Way Tensor-Train Discriminant Analysis (2WTTDA)}
\begin{algorithmic}[1]
\Require Input tensors $\Y_c^k \in \mathbb{R}^{I_1 \times I_2 \times \dots \times I_N }$ where $c \in \{1,\dots ,C\}$ and $k \in \{1,\dots,K\}$, initial tensor factors ${\U_n},  n \in \{1,\dots ,N\}$, $d$, $\lambda$, $R_1,\dots,R_N$, $MaxIter$, $LoopIter$
\Ensure $\U_n, n \in \{1,\dots ,N\}$, and $X_c^k,\quad \forall c,k$
\While{ $iter<LoopIter$}
\State $\Y_L \gets \Y\times_{d+1,\dots,N}^{2,\dots,N-d+1}(\U_{d+1}\times_3^1\dots\times_3^1\U_N)$.
\State $[\U_i]\gets TTDA(\Y_L,\lambda, R_i, MaxIter) \forall i\in \{1,\dots,d\}$.
\State $\Y_R \gets \Y\times_{1,\dots,d}^{2,\dots,d+1}(\U_{1}\times_3^1\dots\times_3^1\U_d)$.
\State $[\U_i]\gets TTDA(\Y_R,\lambda, R_i, MaxIter) \forall i\in \{d+1,\dots,N\}$.
\State $iter=iter+1$.
\EndWhile
\State $\X_c^k \gets \L(\U_1\times_3^1\dots\times_3^1\U_d)^\top\T_d(\Y_c^k)\R(\U_{d+1}\times_3^1\dots\times_3^1\U_N)^\top$.
\end{algorithmic}
\label{alg:2WTTDA}
\end{algorithm}

\vspace{-2em}
\subsection{Three-way Tensor-Train Discriminant Analysis (3WTTDA)} 

Elaborating on the idea of 2WTTDA, one can increase the number of modes of the projected samples which will increase the number of tensor factor sets, or the number of subspaces to be approximated using TT structure. For example, one may choose the number of modes of the projections as three, i.e. $\X_c^k \in \mathbb{R}^{R_{d_1}\times R_{d_2} \times \hat{R}_{d_2}}$, where $1<d_1<d_2<N$. This model, named as Three-way Tensor-Train Decomposition (3WTT), is given in (\ref{eq:ttscthw}) and represented graphically in Fig. \ref{fig:ttdthw}. 
\vspace{-.5em}
\begin{gather}
\Y_c^k= \Bigg( \bigg(\X_c^k \times_3^{N-d_2+2} \big(\U_{d_2+1} \times_3^1 \dots\times_3^1 \U_N\big)\bigg) \times_2^{d_2-d_1+2} \nonumber \\ \big(\U_{d_1+1}\times_3^1\dots \times_3^1 \U_{d_2}\big)\Bigg)\times_1^{d_1+2} \big(\U_{1}\times_3^1\dots \times_3^1\U_{d_1} \big).
\label{eq:ttscthw}
\end{gather}

To maximize discriminability using 3WTT, one can utilize an iterative approach as in Algorithm \ref{alg:2WTTDA}, where inputs are projected on all tensor factor sets except the set to be optimized, then TTDA is applied to the projections. \textcolor{black}{The flowchart for the corresponding algorithm is illustrated in Fig. \ref{fig:alg3w}}. This procedure can be repeated until a convergence criterion is met or a number of iterations is reached. The values of $d_{1}$ and $d_{2}$ are calculated such that the product of dimensions corresponding to each set is as close to $(\prod_{i=1}^NI_i)^{1/3}$ as possible. It is important to note that 3WTT will only be meaningful for tensors of order three or higher. For three-mode tensors, 3WTT is equivalent to Tucker Model. When there are more than four modes, the number of branches can be increased accordingly.

\def\pleft{120}
\def\pright{60}
\def\pmiddle{90}
\tikzstyle{block} = [rectangle, draw, fill=white!80!cyan,
    text width=5em, text centered, rounded corners, minimum height=4em]
\tikzstyle{line} = [draw, very thick, color=black!50, -latex']
\tikzstyle{cloud} = [draw, ellipse, fill=olive!20, minimum height=2em]
\tikzset{
      left connect/.style = {line width=1pt, draw=blue!50!cyan!45!gray, -latex'},
      right connect/.style = {line width=1pt, draw=red!40!gray!60!orange, -latex'},
      middle connect/.style = {line width=1pt, draw=green!50!black!40!yellow, -latex'}
    }

\begin{figure*}
    \begin{subfigure}[b]{.98\columnwidth}
        \centering
        \scalebox{0.45}{
        \Large
        \begin{tikzpicture}
            \foreach \i in {1,...,2}
                \path (225+\i*45:\bc*2) node (c\i) [left=7*\bc cm] {$I_{\i}$};
              \path (225:\bc*2) node (c3) [left=7*\bc cm] {$I_{N}$};
              \path (180:\bc*2) node (c4) [left=7*\bc cm] {$I_{N-1}$};
              \path (90:\bc*2) node (c5) [left=7*\bc cm] {$\dots$};
              \path (0:\bc*0) node (c6) [second node] [left=6.8*\bc cm] {$\Y_c^k$};
              \node [rotate=120] at (45:\bc*1) [above left=0.4*\bc cm and 7.3*\bc cm] {$\dots$};
              \node [rotate=60] at (135:\bc* 1) [above left=0.25*\bc cm and 7.3*\bc cm] {$\dots$};
              \path (180:\bc*6) node {\bf\large{=}};
              \path (180:\bc*4) node (n1) [net node] {$\U_{1}$};
              \path (180:\bc*4) node (b1) [below=2*\bc cm] {$I_1$};
              \path (180:\bc*2.5) node (n2) {$\dots$};
              \path (180:\bc*1) node (n3) [net node] {$\U_{d}$};
              \path (180:\bc*1) node (b2) [below=2*\bc cm] {$I_{d}$};
              \path (0:\bc*2) node (n4) [net node] {$\U_{d+1}$};
              \path (0:\bc*2) node (b4) [below=2*\bc cm] {$I_{d+1}$};
              \path (0:\bc*3.5) node (n7) {$\dots$};
              \path (0:\bc*5) node (n5) [net node] {$\U_{N}$};
              \path (0:\bc*0.5) node (n6) [second node] {$X_{c}^k$};
              \path (0:\bc*5) node (b5) [below=2*\bc cm] {$I_{N}$};
              \foreach \i in {9,10}{
                \path (180:\bc*16-\bc*2*\i) node (n\i) [below=1.5*\bc cm] {};}
              \foreach \i in {1,...,5}{
              \path [second connect] (c\i) -- (c6);;}
              \path [net connect] (n1) -- (n2) -- (n3) -- (n6) -- (n4) -- (n7) -- (n5);;
              \path [second connect] (n1) --  (b1)  (n3)  -- (b2)  (n4)-- (b4) (n5) --(b5);;
        \end{tikzpicture}
        \normalsize
        }
        \caption{}
        \label{fig:ttdtw}
    \end{subfigure}
    \begin{subfigure}[b]{.98\columnwidth}
        \scalebox{0.5}{
        \Large
        \begin{tikzpicture}
            \foreach \i in {1,...,2}
                \path (225+\i*45:2) node (c\i) [left=7cm] {$I_{\i}$};
              \path (225:2) node (c3) [left=7cm] {$I_{N}$};
              \path (180:2) node (c4) [left=7cm] {$I_{N-1}$};
              \path (90:2) node (c5) [left=7cm] {$\dots$};
              \path (0:0) node (c6) [second node] [left=6.8cm] {$\Y_c^k$};
              \node [rotate=120] at (45: 1) [above left=0.4cm and 7.3cm] {$\dots$};
              \node [rotate=60] at (135: 1) [above left=0.25cm and 7.3cm] {$\dots$};
              \path (180:6) node {\bf\large{=}};
              
              \path (0:0.5) node (n6) [second node] {$\X_{c}^k$};
              
              \path (180:4) node (n1) [below=1cm] [net node] {$\U_{d_1+1}$};
              \path (180:4) node (b1) [below=3cm] {$I_{d_1+1}$};
              \path (180:2.5) node (n2) [below=1.5cm] {$\dots$};
              \path (180:1) node (n3) [below=1cm] [net node] {$\U_{d_2}$};
              \path (180:1) node (b2) [below=3cm] {$I_{d_2}$};
              
              \path (0:2) node (n4) [below=1cm] [net node] {$\U_{d_2+1}$};
              \path (0:2) node (b4) [below=3cm] {$I_{d_2+1}$};
              \path (0:3.5) node (n7) [below=1.5cm] {$\dots$};
              \path (0:5) node (n5) [below=1cm] [net node] {$\U_{N}$};
              \path (0:5) node (b5) [below=3cm] {$I_{N}$};
              
              \path (180:4) node (n8) [above=1cm] [net node] {$\U_{1}$};
              \path (180:4) node (b8) [above=3cm] {$I_{1}$};
              \path (180:2.5) node (n9) [above=1.5cm] {$\dots$};
              \path (180:1) node (n10) [above=1cm] [net node] {$\U_{d_1}$};
              \path (180:1) node (b10) [above=3cm] {$I_{d_1}$};
              
              \foreach \i in {1,...,5}{
              \path [second connect] (c\i) -- (c6);;}
              \path [net connect] (n8) -- (n9) -- (n10) -- (n6);;
              \path [second connect] (n8) --  (b8)  (n10)  -- (b10);;
              \path [net connect] (n1) -- (n2) -- (n3) -- (n6) -- (n4) -- (n7) -- (n5);;
              \path [second connect] (n1) --  (b1)  (n3)  -- (b2)  (n4)-- (b4) (n5) --(b5);;
        \end{tikzpicture}
        \normalsize
        }
        \caption{}
        \label{fig:ttdthw}
    \end{subfigure}
    
    \begin{subfigure}[b]{.98\columnwidth}
        \centering
        \begin{tikzpicture}[scale=1, node distance = 2.5cm, auto]
            \node [cloud] (data) {$\Y, \lambda, d$};
            \draw (0, 1.5) node [block] (ttpca) {Apply TT in \cite{oseledets2011tensor}};
            \node [cloud, left of=data] (factorl) {${\U_1,\dots, \U_d}$};
            \node [cloud, right of=data] (factorr) {${\U_{d+1},\dots, \U_N}$};
            \node [block, below of = factorl] (lproj) {Project $\Y$ according to line 4.};
            \node [block, below of = data] (TTDA1) {Apply TTDA};
            \node [block, below of = factorr] (rproj) {Project $\Y$ according to line 2.};
            
            \path [line] (data) -- (ttpca);
            \path [left connect] (lproj) -- node [below, color=black] {$\Y_L$} (TTDA1);
            \path [right connect] (rproj) -- node [color=black] {$\Y_R$} (TTDA1);
            \path [left connect, dashed] (TTDA1) -- (factorr);
            \path [right connect, dashed] (TTDA1) -- (factorl);
            \path [left connect, dashed] (data) -- (lproj);
            \path [left connect, dashed] (ttpca) -- (factorl);
            \path [right connect, dashed] (data) -- (rproj);
            \path [right connect, dashed] (ttpca) -- (factorr);
            \path [left connect, dashed] (factorl) -- (lproj);
            \path [right connect, dashed] (factorr) -- (rproj);
        \end{tikzpicture}
        \caption{}
        \label{fig:alg2w}
    \end{subfigure}
    \begin{subfigure}[b]{.98\columnwidth}
        
        \begin{tikzpicture}[scale=1, node distance = 2cm, auto]
            \draw (-4, -.75) node [cloud] (data) {$\Y, \lambda, d$};
            \draw (0, 2.5) node [block] (ttpca) {Apply TT in \cite{oseledets2011tensor}};
            \draw (-2.5, 0) node [cloud] (factorl) {${\U_1,\dots, \U_{d_1}}$};
            \draw (0, .75) node [cloud] (factorm) {${\U_{d_1+1},\dots, \U_{d_2}}$};
            \draw (2.5, 0) node [cloud] (factorr) {${\U_{d_2+1},\dots, \U_N}$};
            \draw (-2, -2.5) node [block] (proj) {Project $\Y$};
            \draw (2, -2.5) node [block] (ttda) {Apply TTDA};
            
            \path [line] (data) to [out=90 , in=180] (ttpca);
            \path [left connect] (proj) to [out=-20,in=-160] node [below, color=black] {$\Y_L$} (ttda);
            \path [right connect] (proj) to [out=20,in=160] node [below, color=black] {$\Y_R$} (ttda);
            \path [middle connect] (proj) -- node [below, color=black] {$\Y_M$} (ttda);
            \path [left connect] (ttda) -- (factorl);
            \path [right connect] (ttda) -- (factorr);
            \path [middle connect] (ttda) -- (factorm);
            \path [left connect, dashed] (data) to [out=-70,in=\pleft] (proj);
            \path [line, dashed] (ttpca) -- (factorl);
            \path [right connect,dashed] (data) to [out=-10,in=\pright] (proj);
            \path [line,dashed] (ttpca) -- (factorr);
            \path [middle connect,dashed] (data) to [out=-30,in=\pmiddle] (proj);
            \path [line,dashed] (ttpca) -- (factorm);
            \path [right connect, dashed] (factorl) to [out=-40,in=\pright] (proj);
            \path [middle connect, dashed] (factorl) to [out=-90, in=\pmiddle] (proj);
            \path [left connect, dashed] (factorm) to [out=-100,in=\pleft] (proj);
            \path [right connect, dashed] (factorm) to [out=-90,in=\pright] (proj);
            \path [left connect, dashed] (factorr)  to [out=-175,in=\pleft] (proj);
            \path [middle connect, dashed] (factorr) to [out=-170,in=\pmiddle] (proj);
        \end{tikzpicture}
        \caption{}
        \label{fig:alg3w}
    \end{subfigure}
    \caption{\textcolor{black}{Illustration of the proposed methods: (a) The proposed tensor network structure for 2WTT; (b) The proposed tensor network structure for 3WTT; (c) The flow diagram for 2WTTDA (Algorithm \ref{alg:2WTTDA}); (d) The flow diagram for 3WTTDA}}
\end{figure*}
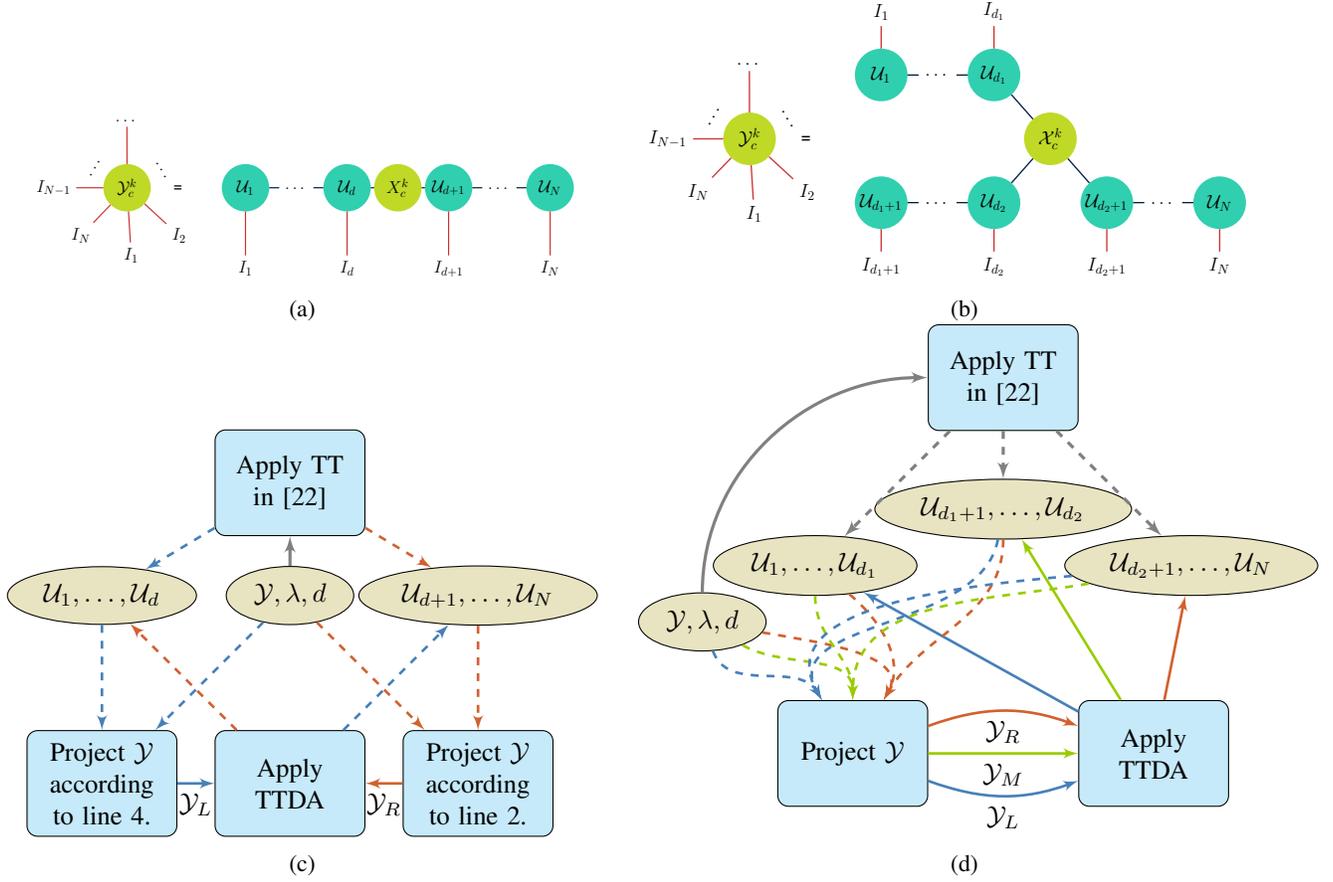

\vspace{-.7em}

\section{Analysis of Storage, Training Complexity and Convergence}
\label{sec:compC}
In this section, we derive the storage and computational complexities of the aforementioned algorithms as well as providing a convergence analysis for TTDA.
\vspace{-.5em}
\subsection{Storage Complexity} 
Let $I_n=I,  n \in \{1,2,\ldots,N\}$ and $R_l=r, l\in \{2,\dots, N-1\}$. Assuming $N$ is a multiple of both 2 and 3, 
total  storage complexities are:

\begin{itemize}
    \item $\mathcal{O}((N-1)r^2I+rI+rCK)$ for TT Decomposition, where $R_1=1, R_N=r$;
    \item $\mathcal{O}((N-2)r^2I+2rI+r^2CK)$ for Two-Way TT Decomposition, where $R_1=R_N=1$;
    \item $\mathcal{O}((N-3)r^2I+3rI+r^3CK)$ for Three-Way TT Decomposition, where $R_1=R_{d_1}=R_N=1$;
    \item $\mathcal{O}(NrI+r^NCK)$ for Tucker Decomposition, where $R_1=R_N=r$.
\end{itemize}

\begin{table}
    \centering
    \caption{Storage Complexities of Different Tensor Decomposition Structures}
    \begin{tabular}{c|c|c}
        Methods & Subspaces ($\U_{n}$s) ($\mathcal{O}(.)$)  & Projections $\X_{c}^{k}$ ($\mathcal{O}(.)$) \\
         \hline
         TT & $(N-1)r^2I+rI$ & $rCK$ \\
         2WTT &  $(N-2)r^2I+2rI$ & $r^2CK$ \\
         3WTT & $(N-3)r^2I+3rI$ & $r^3CK$ \\
         TD & $NrI$ & $r^NCK$
    \end{tabular}
    \label{tab:s_comp}
\end{table}

These results show that when the number of modes for the projected samples is increased, the storage cost increases exponentially for $\X_c^k$ while the cost of storing $\U_n$s decreases quadratically. Using the above, one can easily find the optimal number of modes for the projected samples that minimizes storage complexity. Let the number of modes of $\X_c^k$ be denoted by $f$. The storage complexity of the decomposition is then $\mathcal{O}((N-f)r^2I+f rI+r^f CK)$. The optimal storage complexity is achieved by taking the derivative of the complexity in terms of $f$ and equating it to zero. In this case, the optimal $f$ is given by \[\hat{f}= round\left(\log_r\left(\frac{r^2 I- r I}{C K \ln(r)}\right)\right),\] where $round(.)$ is an operator that rounds to the closest positive integer.
\vspace{-.5em}
\subsection{Computational Complexity}
For all of  the decompositions mentioned except for DGTDA and LDA, the $\U_n$s and $\X_c^k$ depend on each other which makes these decompositions iterative. The number of iterations will be denoted as $t_c$  and $t_t$ for CMDA and TT-based methods, respectively. For the sake of simplicity, we also define $C_s=2CK$. The total cost of finding $\U_n$s and $\X_c^k$ $\forall n, c, k$, where $r<<I$ is in the order of:
\begin{itemize}
    \item $\mathcal{O}\Big(I^N\big[(C_s+t_tr(r+N-1))I^{N}+t_tr^4(I+r^2I^{-1})\big]\Big)$ for TTDA;
    \item $\mathcal{O}\Big(rI^{N}\frac{C_s}{2}+2I^{N/2}\big[(C_s+t_tr(r+N/2-1))I^{N/2}+t_tr^4I+t_tr^6I^{-1}\big]\Big)$ for 2WTTDA;
    \item $\mathcal{O}\Big(rI^{N}\frac{C_s}{2}+3I^{N/3}\big[(C_s+t_tr(r+N/3-1))I^{N/3}+t_tr^4I+t_tr^6I^{-1}\big]\Big)$ for 3WTTDA.
\end{itemize}

\noindent If convergence criterion is met with a small number of iterations, i.e. $t_tr(r+N/f-1)<<C_s$, and $I^{N/f}>>r^6$ for all $f$, the reduced complexities are as given in Table \ref{tab:c_comp}.

\begin{table}
    \centering
    \caption{Computational complexities of various algorithms. The number of iterations to find the subspaces are denoted as $t_c$ for CMDA and $t_t$ for TT-based methods. $C_s=2CK$. ($r<<I$, $t_tr(r+N/f-1)<<C_s$, and $I^{N/f}>>r^6$)}
    \begin{tabular}{c|c}
    \hline
      Methods & Order of Complexity ($\mathcal{O}(.)$)  \\
         \hline
         LDA & $C_sI^{2N}+I^{3N}$\\
         \hline
         DGTDA & $3I^3+NC_sI^{N+1}$ \\
         \hline
         CMDA & $2t_c I^3+t_cN^2C_sI^N$\\
         \hline
         {TTDA} & $C_sI^{2N}$ \\
         \hline
         {2WTTDA} & $(r/2+2)C_sI^{N}$ \\
         \hline
         {3WTTDA} & $(rI^{N/3}/2+3)C_sI^{2N/3}$ 
    \end{tabular}
    \label{tab:c_comp}
\end{table}

We can see from Table \ref{tab:c_comp} that with increasing number of branches, TT-based methods become more efficient if the algorithm converges in a few number of iterations. This is especially the case if the ranks of tensor factors are low as this reduces the dimensionalities of the optimal solutions and the search algorithm finds a solution to (\ref{eq:TTDAU}) faster.  When this assumption holds true, the complexity is dominated by the formation of scatter matrices. Note that the ranks are assumed to be much lower than dimensionalities and number of modes is assumed to be sufficiently high. When these assumptions do not hold, the complexity of computing $\A_n$ might be dominated by terms with higher powers of $r$. This indicates that TT-based methods are more effective when the tensors have higher number of modes and when the TT-ranks of the tensor factors are low. DGTDA has an advantage over all other methods as it is not iterative and the solution for each mode is not dependent on other modes. On the other hand, the solution of DGTDA is not optimal and there are no convergence guarantees except when the ranks and initial dimensions are equal to each other, i.e. when there is no compression.


\vspace{-.5em}
\subsection{Convergence}
To analyze the convergence of TTDA, we must first establish a lower bound for the objective function of LDA, as (\ref{eq:TTDA}) is lower bounded by the objective value of (\ref{eq:LDA}).

\begin{lemma}

Given that $\lambda \in \R_+$, \textit{i.e.} a nonnegative real number, the lower bound of $tr(U^\top S_W U)-\lambda tr(U^\top S_BU)$ is achieved when $U\in \mathbb{R}^{\prod_{n=1}^N I_n\times r}$ satisfies the following two conditions, simultaneously: 
\begin{enumerate}
    \item The columns of $U$ are in the null space of $S_{W}$: $\mathbf{u}_{j} \in null(S_{W}), \forall j \in \{1,\ldots, r\}$.
    \item $\{\mathbf{u}_{1},\mathbf{u}_{2},\ldots,\mathbf{u}_{r}\}$ are the top-$r$ eigenvectors of $S_{B}$.
\end{enumerate}  
In this case, the minimum of $tr(U^\top S_W U)-\lambda tr(U^\top S_BU)=-\lambda\sum_{i=1}^{r}\sigma_{i}$, where $\sigma_{i}$s are the eigenvalues of $S_{B}$.
\end{lemma}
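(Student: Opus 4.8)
The plan is to prove the bound by treating the two trace terms separately, exploiting the fact that each depends on $U$ through a positive semidefinite quadratic form and that $\lambda \geq 0$. First I would record the structural fact that both scatter matrices are positive semidefinite: by (\ref{eq:LDAscat}) and (\ref{eq:LDAscat2}), $S_W$ and $S_B$ are sums of outer products $\mathbf{v}\mathbf{v}^\top$, so $S_W \succeq 0$ and $S_B \succeq 0$. Throughout, $U=[\mathbf{u}_1,\dots,\mathbf{u}_r]$ ranges over matrices with orthonormal columns, i.e. $U^\top U = \mathbb{I}_r$.

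Second, I would lower bound each term in isolation. For the within-class term, $tr(U^\top S_W U) = \sum_{j=1}^r \mathbf{u}_j^\top S_W \mathbf{u}_j \geq 0$, with equality if and only if every column lies in $null(S_W)$ (for a PSD matrix, $\mathbf{u}_j^\top S_W \mathbf{u}_j = 0 \iff S_W\mathbf{u}_j = 0$), which is exactly Condition~1. For the between-class term I would invoke the Ky Fan / Rayleigh--Ritz trace maximization principle: over all $U$ with $U^\top U = \mathbb{I}_r$, $\max\, tr(U^\top S_B U) = \sum_{i=1}^r \sigma_i$, the sum of the $r$ largest eigenvalues of $S_B$, attained precisely when the columns of $U$ span the leading $r$-dimensional eigenspace of $S_B$, which is Condition~2. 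Since $\lambda \geq 0$, this gives $-\lambda\, tr(U^\top S_B U) \geq -\lambda \sum_{i=1}^r \sigma_i$.

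Third, adding the two inequalities yields, for every feasible $U$,
\[
tr(U^\top S_W U) - \lambda\, tr(U^\top S_B U) \;\geq\; -\lambda \sum_{i=1}^{r}\sigma_i ,
\]
which is the asserted lower bound. Because the combined inequality is tight if and only if both constituent inequalities are tight, the bound is attained exactly when Conditions~1 and~2 hold together, and substituting the equality cases gives the value $0 - \lambda\sum_{i=1}^r \sigma_i = -\lambda\sum_{i=1}^r \sigma_i$.

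The hard part will be the word \emph{simultaneously}: the two equality conditions are each sharp for a different family of subspaces, so the lower bound is genuinely \emph{achieved} (rather than merely valid) only if the leading eigenvectors of $S_B$ can be chosen to lie inside $null(S_W)$. I would therefore need to argue compatibility --- namely that in the relevant small-sample, high-dimensional regime $S_W$ is rank-deficient and $null(S_W)$ is large enough to contain an $r$-dimensional $S_B$-invariant subspace spanned by its top eigenvectors --- so that the two conditions do not conflict and the bound is tight. The remaining ingredients (positive semidefiniteness, the per-term equality characterizations, and the appeal to Ky Fan) are routine.
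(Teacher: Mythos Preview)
Your proposal is correct and follows essentially the same route as the paper: bound $tr(U^\top S_W U)\geq 0$ via positive semidefiniteness, bound $tr(U^\top S_B U)\leq \sum_{i=1}^r\sigma_i$ via the trace-maximization principle, and add. In fact you are more careful than the paper, which simply asserts that the two equality conditions can hold together without addressing the compatibility issue you flag; the paper's own proof does not justify achievability of the bound beyond stating it.
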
 

\begin{proof}
Since $S_W$ is positive semi-definite,
\begin{equation*}
    0\leq \min_{U}tr(U^\top S_{W} U),
\end{equation*}
which implies that when the columns of $U$ are in the null space of $S_{W}$, i.e. $\mathbf{u}_{j} \in null(S_{W}), \forall j \in \{1,\ldots, r\}$, the minimum value will be achieved for the first part of the objective function.  

To minimize the trace difference, we need to maximize $tr(U^\top S_BU)$ which is bounded from above as:
\begin{equation*}
    \max_{U}tr(U^\top S_BU)\leq \sum_{i=1}^r\sigma_i.
\end{equation*}
$tr(U^\top S_BU)$ is maximized when the columns of $U$ are the top-$r$ eigenvectors of $S_B$. Therefore, the trace difference achieves the lower-bound when $\mathbf{u}_{j} \in null(S_{W}), \forall j \in \{1,\ldots, r\}$ and $\{\mathbf{u}_{1},\mathbf{u}_{2},\ldots,\mathbf{u}_{r}\}$ are the top-$r$ eigenvectors of $S_{B}$ and this lower-bound is equal to $-\lambda\sum_{i=1}^{r}\sigma_{i}$.
\end{proof}

As shown above, the objective function of LDA is lower bounded.  Thus, the solution to (\ref{eq:TTDA}) is also lower-bounded.

Let $f(\U_1,\U_2,\dots,\U_N)=tr\left(U^\top S U\right)$, where $U=\L(\U_1\times_3^1\dots\times_3^1\U_N)$ and $S$ is defined as in \eqref{eq:LDA}. If the function $f$ is non-increasing with each update of $\U_n$s, i.e.
\begin{gather}
    f(\U_1^t,\U_2^t,\dots,\U_n^{t-1},\dots,\U_N^{t-1})\geq \nonumber \\ f(\U_1^t,\U_2^t,\dots,\U_n^{t},\dots,\U_N^{t-1}), \qquad \forall t,n \in \{1,2,\ldots, N\},\nonumber
\end{gather}
then we can claim that Algorithm \ref{alg:TTDA} converges to a fixed point as $t\xrightarrow{} \infty$ since $f(.)$ is lower-bounded. In \cite{wen2013feasible}, an approach to regulate the step sizes in the search algorithm was introduced to guarantee global convergence. In this paper, this approach is used to update $\U_n$s. 
Thus, (\ref{eq:TTDAU}) can be optimized globally, and the objective value is non-increasing.
As Multi-Branch extensions utilize TTDA on the update of each branch, proving the convergence of TTDA is sufficient to prove the convergence of 2WTTDA or 3WTTDA.

\section{Experiments}
\label{sec:Exp}
The proposed TT based discriminant analysis methods are evaluated in terms of classification accuracy,  storage complexity, training complexity and sample size. We compared our methods\footnote{Our code is in https://github.com/mrsfgl/MBTTDA} with both linear supervised tensor learning methods including LDA, DGTDA and CMDA\cite{li2014multilinear}\footnote{https://github.com/laurafroelich/tensor\_classification} as well as other tensor-train based learning methods such as MPS \cite{bengua2017matrix}, TTNPE \cite{wang2018tensor}\footnote{https://github.com/wangwenqi1990/TTNPE.} and STTM \cite{chen2019support}\footnote{https://github.com/git2cchen/KSTTM}. The experiments were conducted on four different data sets: COIL-100, Weizmann Face, Cambridge and UCF-101. For all data sets and all methods, we evaluate the classification accuracy and training complexity with respect to storage complexity. 

In this paper, classification accuracy is evaluated using a 1-NN classifier and quantified as $N_{true}/N_{test}$, where $N_{true}$ is the number of test samples which were assigned the correct label and $N_{test}$ is the total number of test samples. Normalized storage complexity is  quantified as the ratio of the total number of elements in the learnt tensor factors ($\U_{n}, \forall n$) and projections ($\mathcal{X}_{c}^{k}, \forall c,k$) of training data, $O_s$, to the size of the original training data ($\Y_c^k, \forall c,k$): \[\frac{O_{s}}{CK\prod_{n=1}^NI_n}.\] 
Training complexity is the total runtime in seconds for learning the subspaces. All experiments were repeated 10 times with random selection of the training and test sets and average classification accuracies are reported.

The regularization parameter, $\lambda$, for each experiment was selected using a validation set composed of all of the samples in the training set and a small subset  of each class from the test set \textcolor{black}{ (10 samples for COIL-100, 5 samples for Weizmann, 1 sample for Cambridge, and 10 samples for UCF-101)}. Utilizing a leave-$s$-out approach, where $s$ is the aforementioned subset size, 5 random experiments were conducted. The optimal $\lambda$ was selected as the value that gave the best average classification accuracy among a range of values from $0.1$ to $1000$ increasing in a logarithmic scale. 
CMDA\textcolor{black}{, TTNPE and MPS} do not utilize the $\lambda$ parameter while DGTDA utilizes eigendecomposition to find $\lambda$ \cite{li2014multilinear}. \textcolor{black}{STTM has an outlier fraction parameter which was set to $0.02$ according to the original paper \cite{chen2019support}.}

\subsection{Data Sets}
\subsubsection{COIL-100}
The dataset consists of 7,200 RGB images of 100 objects of size $128\times 128$. Each object has 72 images, where each image corresponds to a different pose angle ranging from 0 to 360 degrees with increments of 5 degrees \cite{nenecolumbia}. For our experiments, we downsampled the grayscale images of all objects to $64\times 64$. Each sample image was reshaped to create a tensor of size $8\times 8\times 8\times 8$. Reshaping the inputs into higher order tensors is common practice and was studied in prior work \cite{khoromskij2011dlog, oseledets2011tensor, zhao2016tensor, cichocki2017tensor, bengua2017efficient}. 20 samples from each class were selected randomly as training data, i.e. $\Y \in \mathbb{R}^{8\times 8\times 8\times 8 \times 20 \times 100}$, and the remaining 52 samples were used for testing. 

\subsubsection{Weizmann Face Database}
The dataset includes RGB face images of size $512\times352$ belonging to 28 subjects taken from 5 viewpoints, under 3 illumination conditions, with 3 expressions \cite{weiz}. For our experiments, each image was grayscaled, and downsampled to $64\times44$. The images were then reshaped into 5-mode tensors of size $4\times4\times4\times4\times11$ as in \cite{wang2018tensor}. For each experiment, 20 samples were randomly selected to be the training data, i.e. $\Y\in \mathbb{R}^{4\times4\times4\times4\times11\times20\times28}$, and the remaining 25 samples were used in testing.

\subsubsection{Cambridge Hand-Gesture Database}
The dataset consists of 900 image sequences of 9 gesture classes, which are combinations of 3 hand shapes and 3 motions. For each class, there are 100 image sequences generated by the combinations of 5 illuminations, 10 motions and 2 subjects \cite{kim2007tensor}. Sequences consist of images of size $240\times 320$ and sequence length varies. In our experiments, we used grayscaled versions of the sequences and we downsampled all sequences to length $30$. We also included 2 subjects and 5 illuminations as the fourth mode. Thus, we have 10 samples for each of the 9 classes from which we randomly select 4 samples as the training set, i.e. $\Y\in \mathbb{R}^{30\times40\times30\times10\times4\times9}$, and the remaining 6 as test set. 

\subsubsection{UCF-101 Human Action Dataset}
UCF-101 is an action recognition dataset \cite{soomro2012ucf101}. There are 13320 videos of 101 actions, where each action category might have different number of samples. Each sample is an RGB image sequence with frame size $240\times 320\times 3$. The number of frames differs for each sample. In our experiments, we used grayscaled, downsampled frames of size $30\times 40$. From each class, we extracted 100 samples to balance the class sizes where each sample consists of 50 frames obtained by uniformly sampling each video sequence. 60 randomly selected samples from each class were used for training, i.e. $\Y \in \mathbb{R}^{30\times40\times50\times60\times101}$, and the remaining 40 samples were used for testing.
\begin{figure*}
    \centering
    \begin{subfigure}[b]{0.98\columnwidth}
        \centering
        \includegraphics[width=.98\columnwidth]{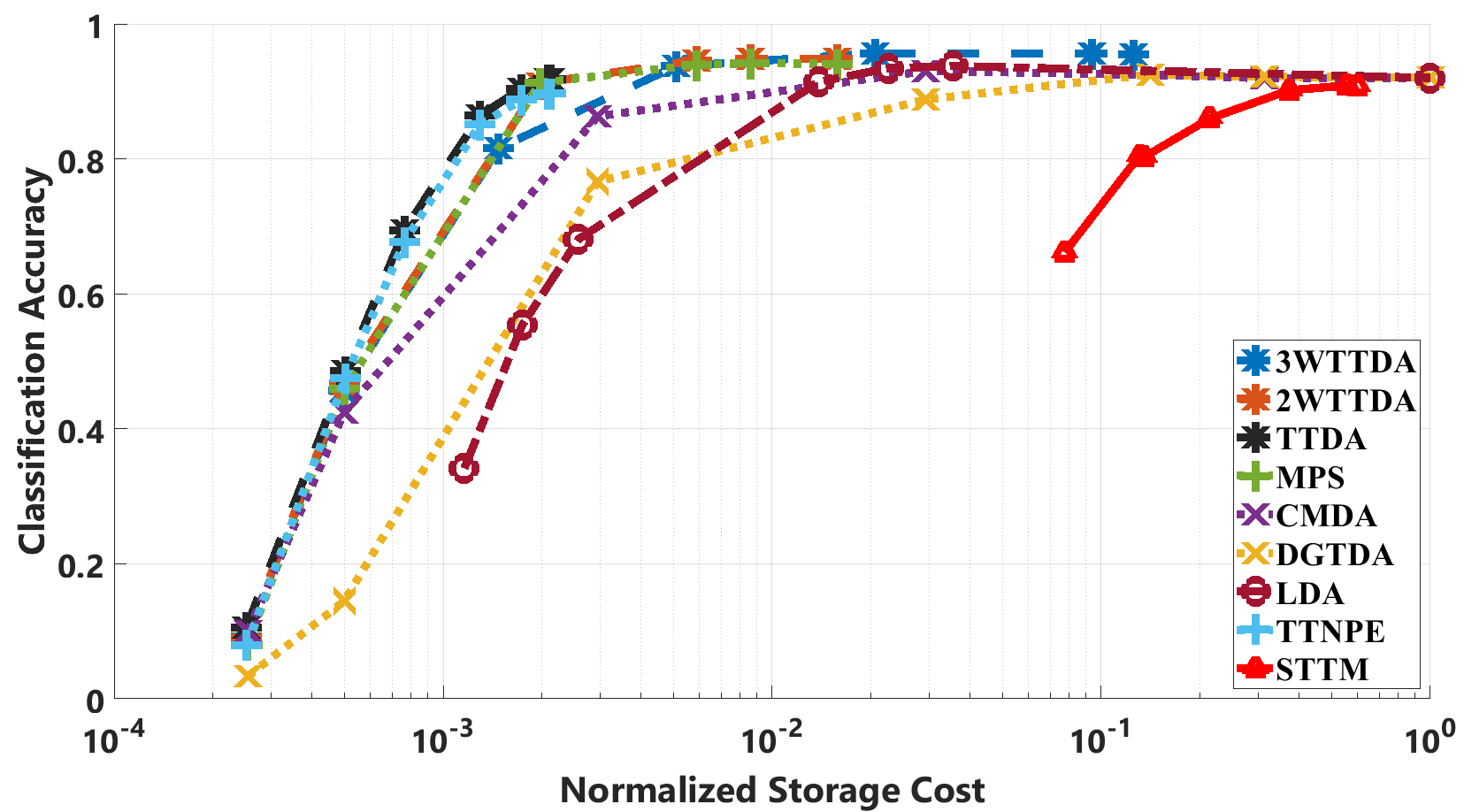}
        \caption{}
        \label{fig:clsacc}
    \end{subfigure}
    \begin{subfigure}[b]{.98\columnwidth}
        \centering
        \includegraphics[width=.98\columnwidth]{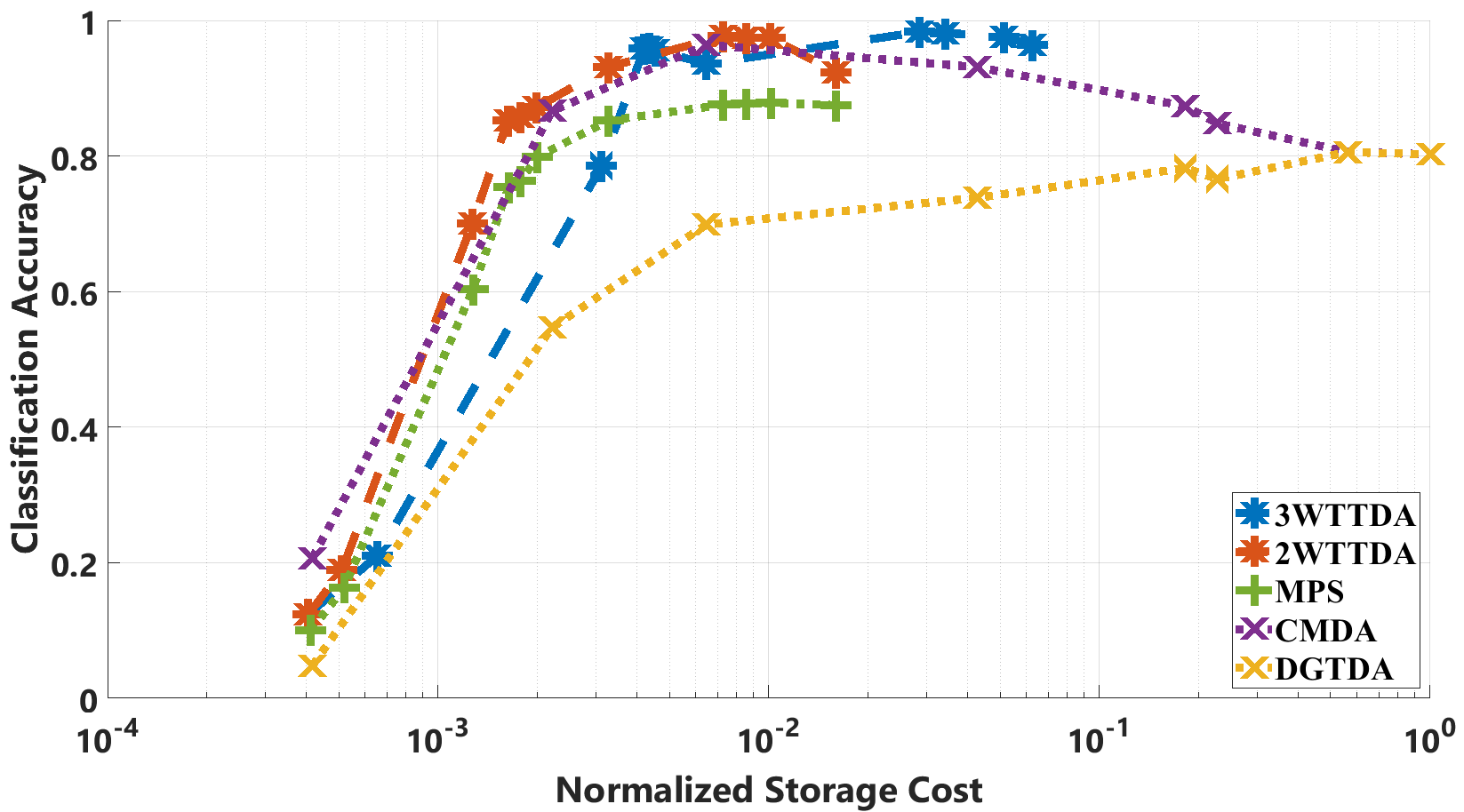}
        \caption{}
        \label{fig:weizclsacc}
    \end{subfigure}  
    
    \begin{subfigure}[b]{.98\columnwidth}
        \centering
        \includegraphics[width=.98\columnwidth]{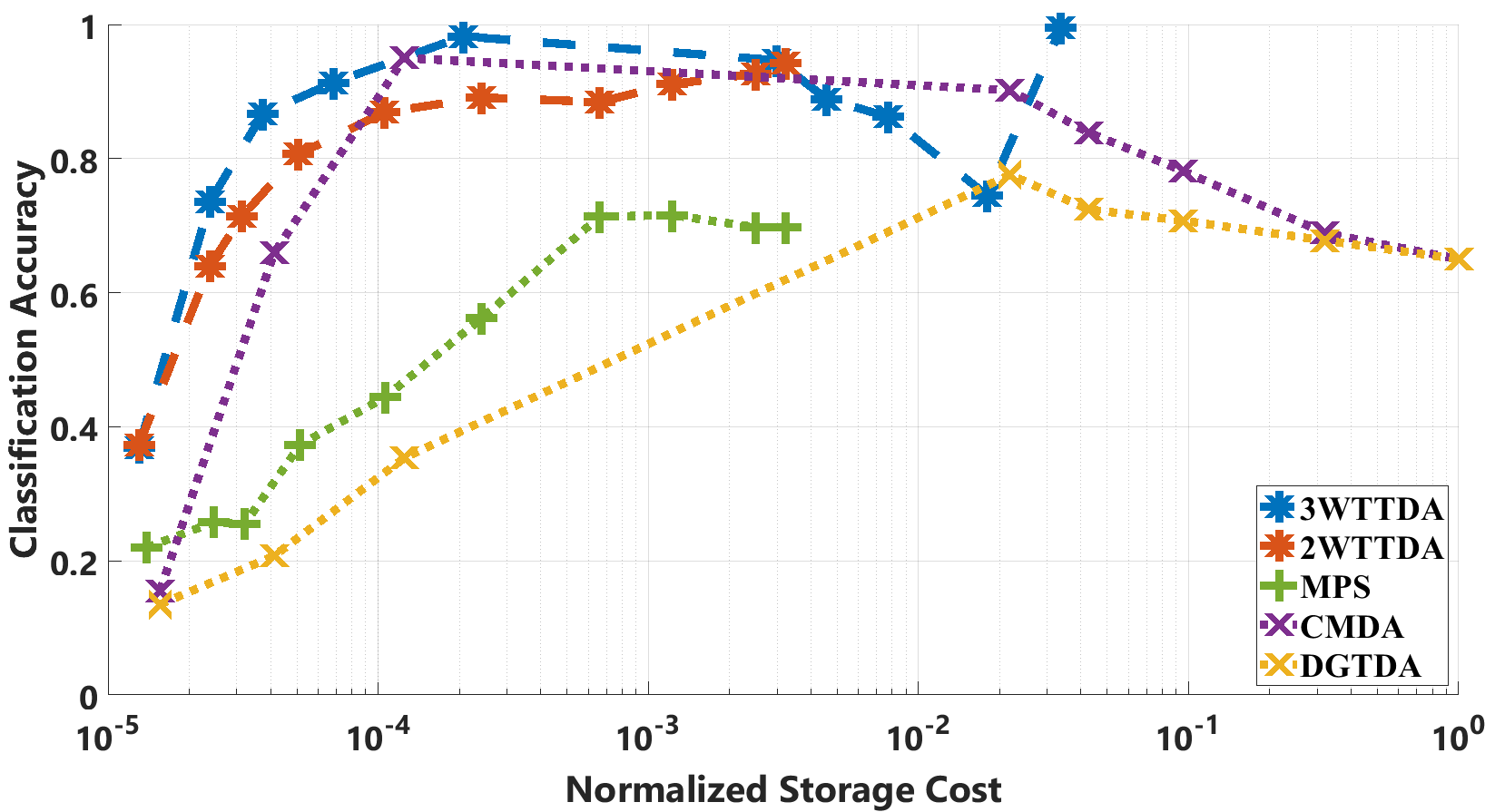}
        \caption{}
        \label{fig:cambclsacc}
    \end{subfigure}
    \begin{subfigure}[b]{.98\columnwidth}
        \centering
        \includegraphics[width=.98\columnwidth]{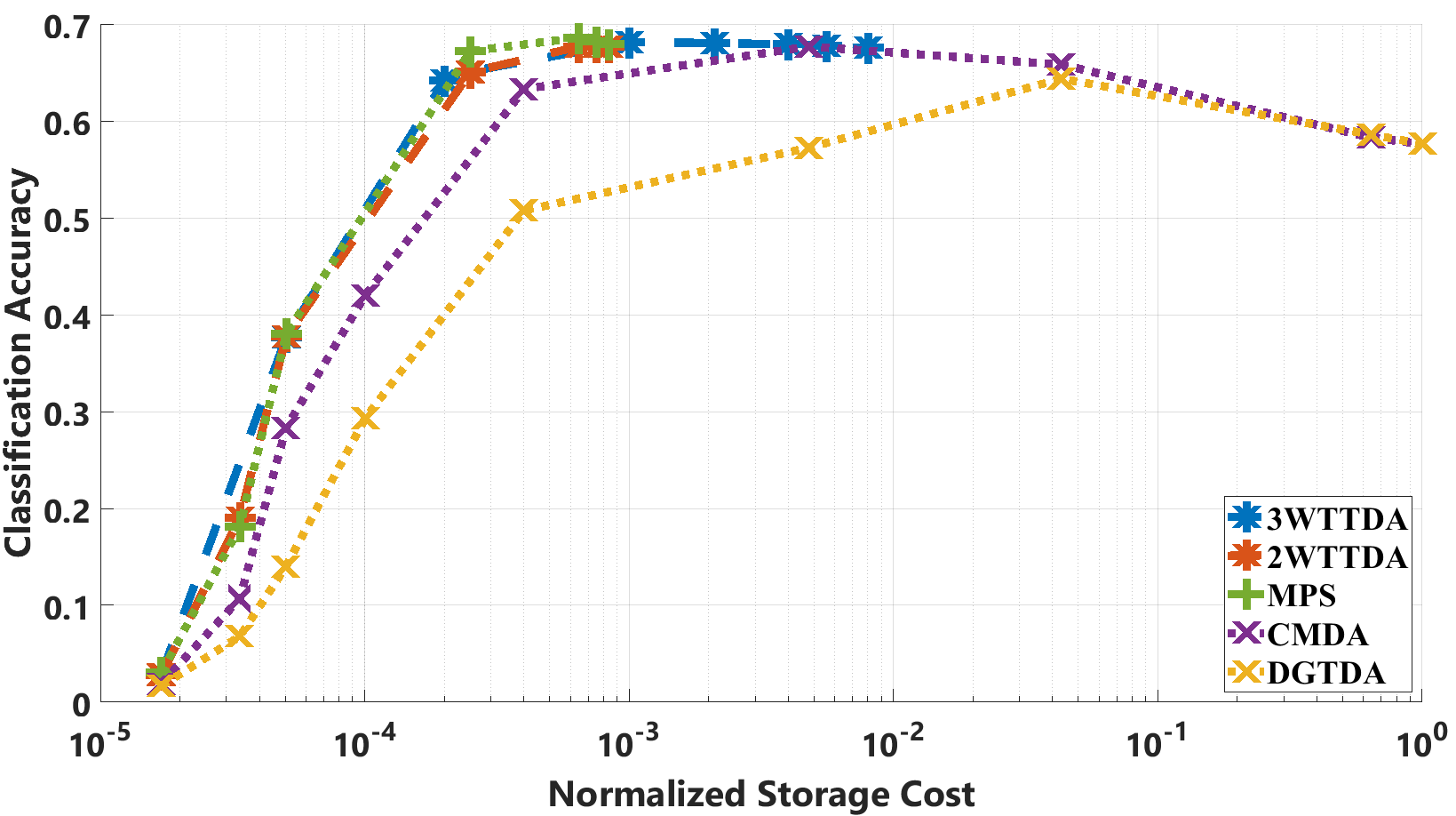}
        \caption{}
        \label{fig:ucfclsacc}
    \end{subfigure}
    \caption{\textcolor{black}{Classification accuracy vs.  Normalized storage cost of the different methods for: a) COIL-100, b) Weizmann Face, c) Cambridge Hand Gesture and d) UCF-101. All TD based methods are denoted using 'x', TT based methods are denoted using '+' and proposed methods are denoted using '*'. STTM and LDA are denoted using '$\triangle$' and 'o', respectively.}}
    \label{fig:acc_all}
\end{figure*}
\begin{figure*}
    \centering
    \begin{subfigure}[b]{.98\columnwidth}
        \centering
        \includegraphics[width=.98\columnwidth]{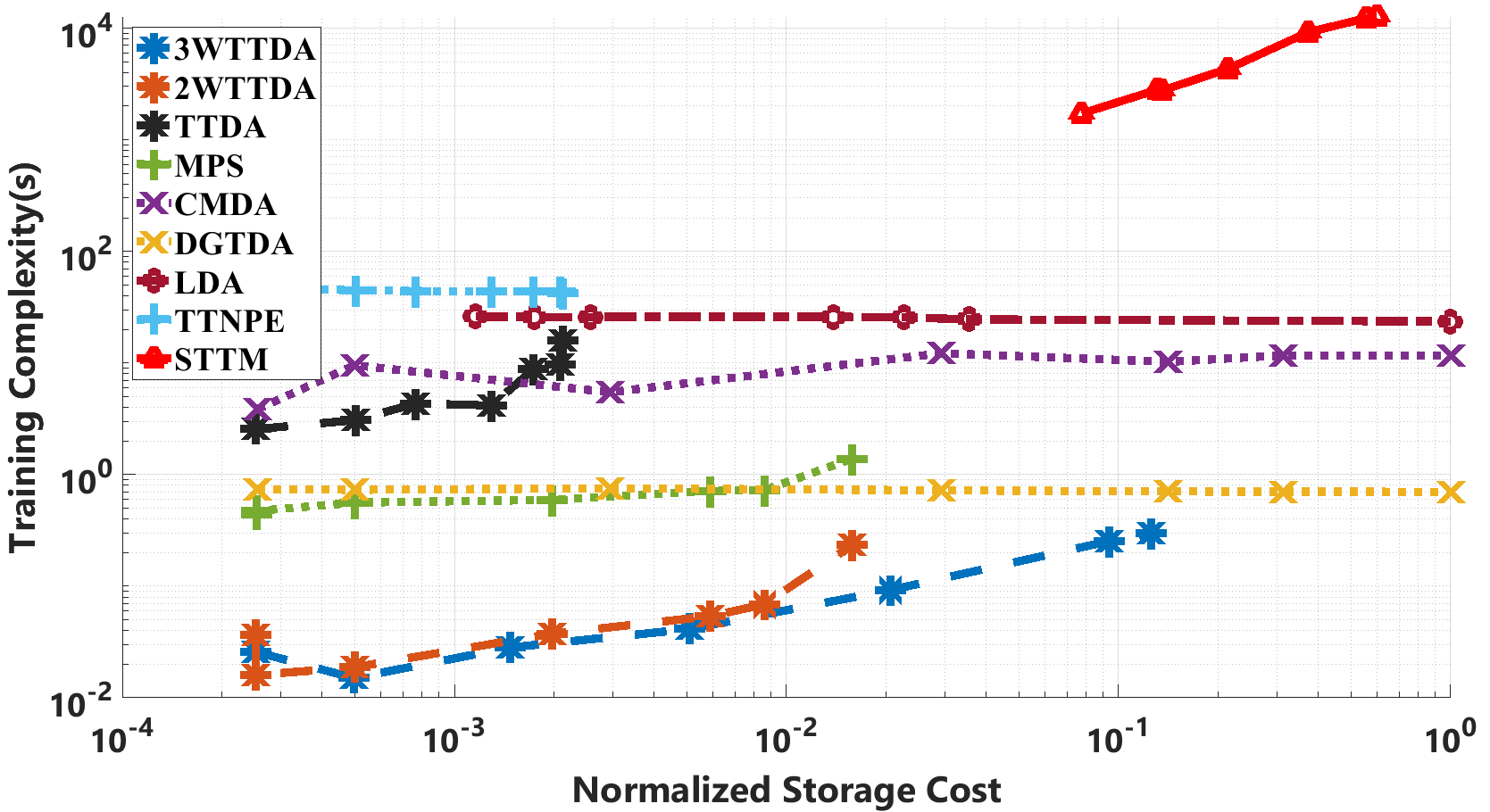}
        \caption{}
        \label{fig:subtime}
    \end{subfigure}
    \begin{subfigure}[b]{.98\columnwidth}
        \centering
        \includegraphics[width=.98\columnwidth]{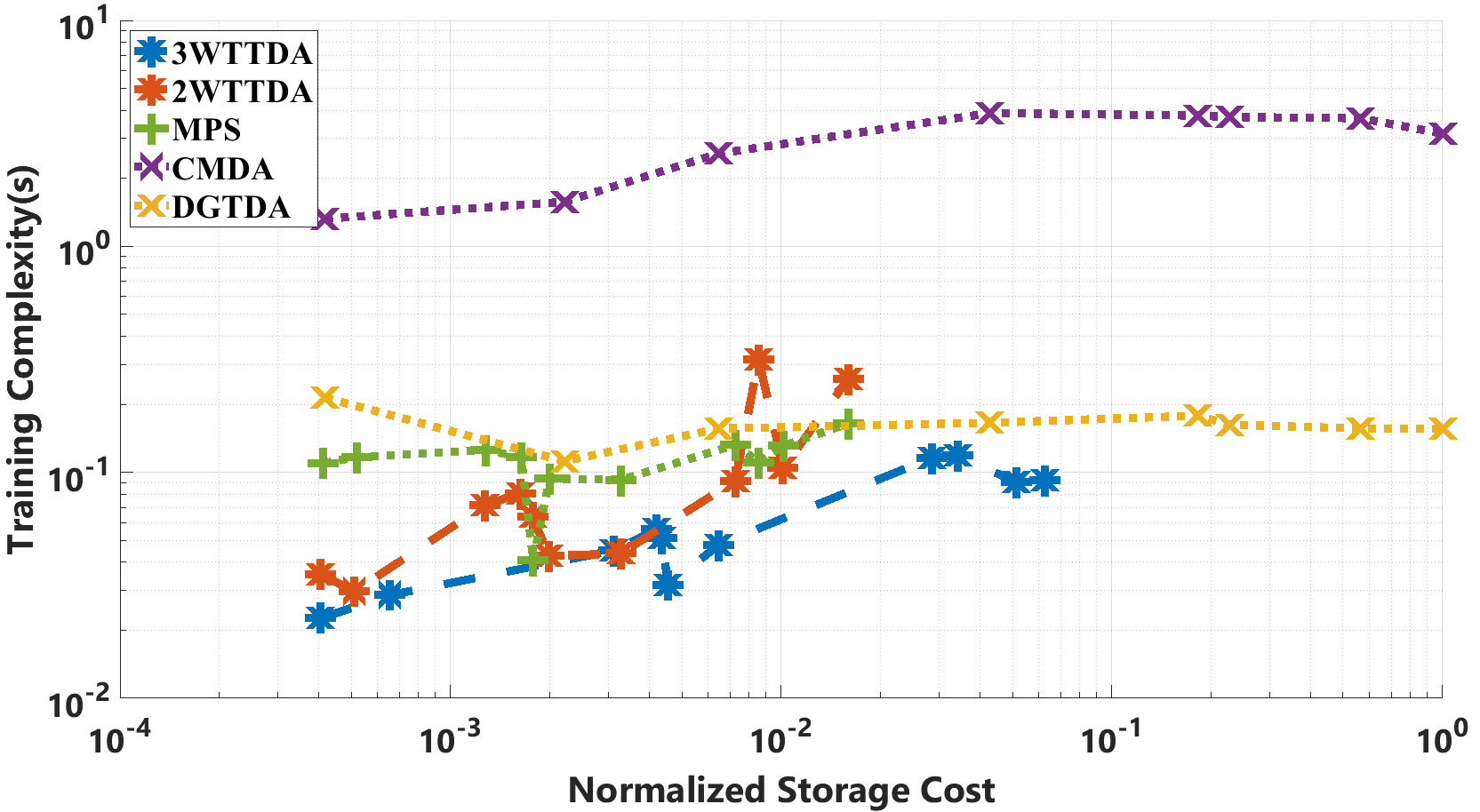}
        \caption{}
        \label{fig:weizsubtime}
    \end{subfigure}
    \begin{subfigure}[b]{.98\columnwidth}
        \centering
        \includegraphics[width=.98\columnwidth]{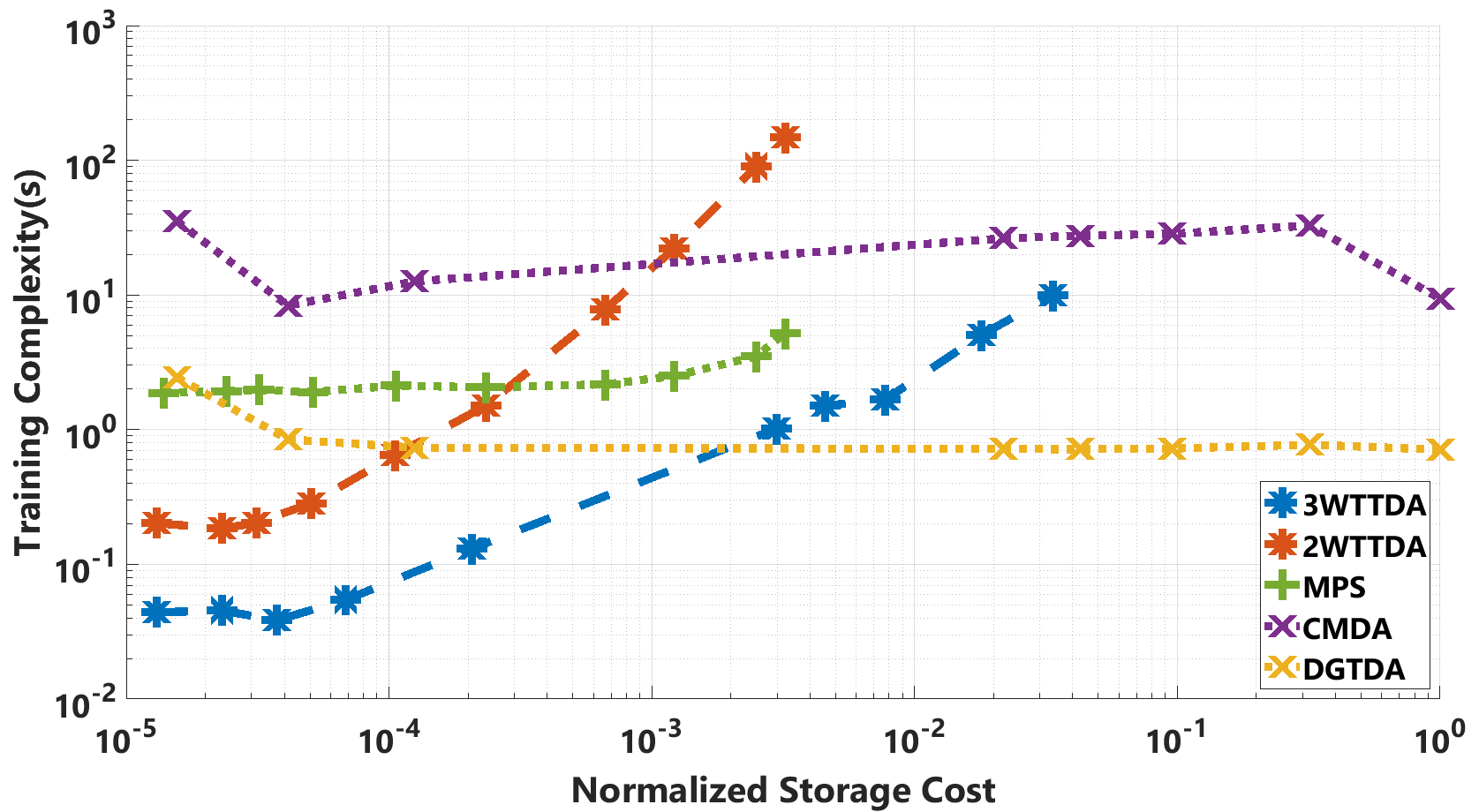}
        \caption{}
        \label{fig:cambsubtime}
    \end{subfigure}
    \begin{subfigure}[b]{.98\columnwidth}
        \centering
        \includegraphics[width=.98\columnwidth]{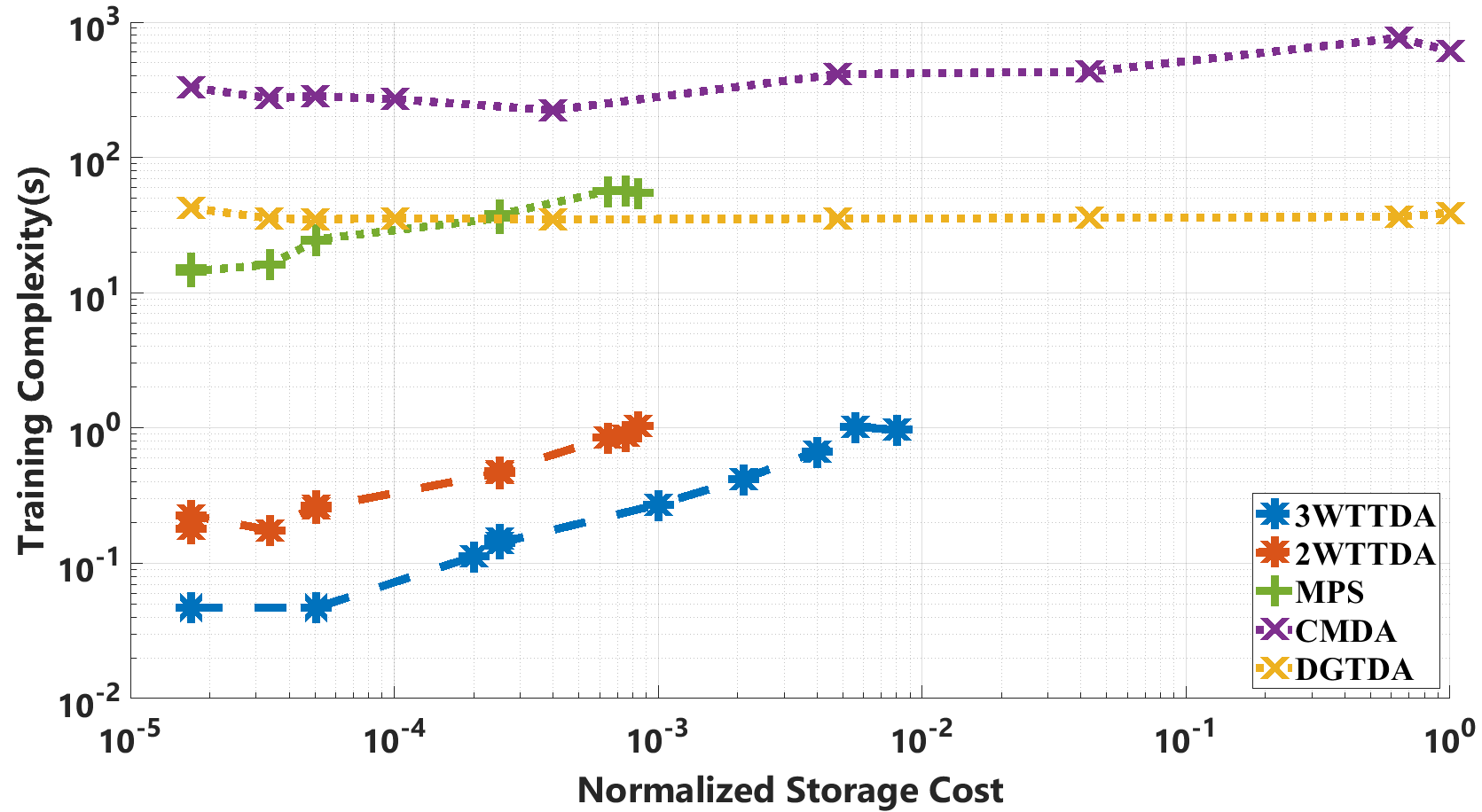}
        \caption{}
        \label{fig:ucfsubtime}
    \end{subfigure}
    \caption{\textcolor{black}{Training complexity  vs.  Normalized storage cost of the different methods for: a) COIL-100, b) Weizmann Face, c) Cambridge Hand Gesture, and d) UCF-101.}}
    \label{fig:subtimeall}
\end{figure*}

\subsection{Classification Accuracy}
We first evaluate the classification accuracy of the different methods with respect to normalized storage complexity. The varying levels of storage cost are obtained by varying the ranks, $R_{i}$s, in the implementation of the tensor decomposition methods.  Varying the truncation parameter $\tau \in (0,1]$, the singular values smaller than $\tau$ times the largest singular value are eliminated. The remaining singular values are used to determine the ranks $R_i$s for both TT-based and TD-based methods. For TT-based methods, the ranks are selected using TT-decomposition proposed in \cite{oseledets2011tensor}, while for TD-based methods truncated HOSVD was used. 

Fig. \ref{fig:clsacc} illustrates the classification accuracy of the different methods with respect to normalized storage complexity for COIL-100 data set. For this particular dataset, we implemented all of the methods mentioned above. It can be seen that the proposed discriminant analysis framework in its original form, TTDA, gives the highest accuracy results followed by TTNPE. However, these two methods only operate at very low storage complexities since the TT-ranks of tensor factors are constrained to be smaller than the corresponding mode's input dimensions. We also implemented STTM, which does not provide compression rates similar to other TT-based methods. This is due to the fact that STTM needs to learn $\frac{C(C-1)}{2}$ classifiers with TT structure. Moreover, these methods have very high computational complexity as will be shown in Section \ref{subsec:trcomp}. For this reason, they will not be included in the comparisons for the other datasets. For a wide range of storage complexities, MPS and 2WTTDA perform the best and have similar accuracy. It can also be seen that the storage costs of MPS and 2WTTDA stop increasing after some point due to rank constraints. This is in line with the theoretical storage complexity analysis presented in Section \ref{sec:compC}. Tucker based methods, such as CMDA and DGTDA, along with the original vector based LDA have lower classification accuracy. 

Fig. \ref{fig:weizclsacc} similarly illustrates the classification accuracy of the different methods on the Weizmann Face Database. For all storage complexities, the proposed 2WTTDA and 3WTTDA perform better than the other methods, including TT based methods such as MPS. 

Fig. \ref{fig:cambclsacc} illustrates the classification accuracy for the Cambridge hand gesture database. In this case, 3WTTDA performs the best for most storage costs. As the number of samples for training, validation and testing is very low for Cambridge dataset, the classification accuracy fluctuates with respect to the dimensionality of the features at normalized storage cost of $0.02$. Similar fluctuations can also be seen in the results of \cite{wang2018tensor}.

Finally, we tested the proposed methods on a more realistic, large sized dataset, UCF-101. For this dataset, TT-based methods perform better than the Tucker based methods. In particular, 2WTTDA performs very close to MPS at low storage costs, whereas 3WTTDA performs well for a range of normalized storage costs and provides the highest accuracy overall.

Even though our methods outperform MPS for most datasets, the classification accuracies get close for UCF-101 and COIL-100. This is due to  the high number of classes in these datasets. As the number of classes increases, the number of scatter matrices that needs to be estimated also increases which results in a larger bias given limited number of training samples. This improved performance of MPS for datasets with large number of classes is also observed when MPS is compared to CMDA. Therefore, the reason that  MPS and the proposed methods perform similarly is a limitation of discriminant analysis rather than the proposed tensor network structure.

\subsection{Training Complexity}
\label{subsec:trcomp}
In order to compute the training complexity, for TT-based methods, each set of tensor factors is optimized until the change in the normalized difference between consecutive tensor factors is less than $0.1$ or $200$ iterations is completed. After updating the factors in a branch, no further optimizations are done on that branch in each iteration. CMDA iteratively optimizes the subspaces for a given number of iterations (which is set to 20 to increase the speed in our experiments) or until the change in the normalized difference between consecutive subspaces is less than $0.1$.

Figs. \ref{fig:subtime}, \ref{fig:weizsubtime}, \ref{fig:cambsubtime}, \ref{fig:ucfsubtime} illustrate the training complexity of the different methods with respect to normalized storage cost for the four different datasets. In particular, Fig. \ref{fig:subtime} illustrates the training complexity of all the methods including TTNPE, TTDA and STTM for COIL-100. It can be seen that STTM has the highest computational complexity among all of the tested methods. This is due to the fact that for a 100-class classification problem, STTM implements $(100)(99)/2$ one vs. one binary classifiers, increasing the computational complexity. Similarly, TTNPE has high computational complexity as it tries to learn the manifold projections which involves eigendecomposition of the embedded graph. Among the remaining methods, LDA has the highest computational complexity as it is based on learning from vectorized samples which increases the dimensionality of the covariance matrices. For the tensor based methods, the proposed 2WTTDA and 3WTTDA have the lowest computational complexity followed by MPS and DGTDA. In particular, for large datasets like UCF-101 the difference in computational complexity between our methods and existing TT-based methods such as MPS is more than a factor of $10^2$. 

\begin{table*}[htb]
    \centering
    \caption{\textcolor{black}{Average classification accuracy (left) and training time (right) with standard deviation for various methods and datasets.}}
    \begin{tabular}{l|c|c|c|c||c|c|c|c|c}
        \hline
        Methods & COIL-100 & Weizmann & Cambridge &UCF-101 &  (s) & COIL-100 & Weizmann & Cambridge & UCF-101\\
        \hline
        3WTTDA &$\mathbf{ 95.6 \pm 0.4}$ & $93.6\pm 2$&$\mathbf{98.2\pm1.7}$ &$\mathbf{68.6\pm0.8}$ & & $\mathbf{0.09 \pm 0.005}$ & $\mathbf{0.05 \pm 0.02}$ &$\mathbf{ 0.11 \pm 0.01}$ &$\mathbf{0.67\pm0.02}$ \\
        2WTTDA & $94.8 \pm 0.5$ & $\mathbf{97.6 \pm 1.2}$& $89.1\pm16.7$ &$67.7\pm0.9$ & & $0.24\pm 0.06$ & $0.09 \pm 0.02$ & $1.7 \pm 1.5$ &$0.853\pm 0.13$\\
        MPS & $94.2 \pm 0.2$ & $87.5 \pm 2.3$& $56.2\pm9.8$ &${67.9\pm0.6}$ & & $1.4\pm 0.13$ & $0.13 \pm 0.01$ & $2.07 \pm 0.25$ &$56.4\pm 1.9$\\
        CMDA & $86.3 \pm 0.7$ & $96.4 \pm 1.03$& $95\pm 2.8$ &$67.7\pm0.8$ & & $12.2\pm6.6$ & $2.6 \pm 0.3$ & $12.6 \pm 0.3$ & $413.5\pm24.1$\\
        DGTDA & $76.6 \pm 0.9$ & $69.9 \pm 1.8$& $35.4\pm8.7$ &$57.3\pm2.7$ & & $0.7\pm0.06$ & $0.16 \pm 0.02$ & $0.7 \pm 0.04$ & $35.3\pm2.9$ 
    \end{tabular}
    \label{tab:acc}
\end{table*}
\subsection{Convergence}
In this section, we present an empirical study of convergence for TTDA in Fig. \ref{fig:conv} where we report the objective value of TTDA, i.e. the expression inside argmin operator in (\ref{eq:TTDAU}), with random initialization of projection tensors. This figure illustrates the convergence of the TTDA algorithm, which is at the core of both 2WTTDA and 3WTTDA, on COIL-100 dataset. It can be seen that even for random initializations of the tensor factors, the algorithm converges in a small number of steps. The convergence rates for 2WTTDA and 3WTTDA are faster than that of TTDA as they update smaller sized projection tensors as shown in Section \ref{sec:compC}.

\begin{figure}
\centering
    \includegraphics[width=.98\columnwidth]{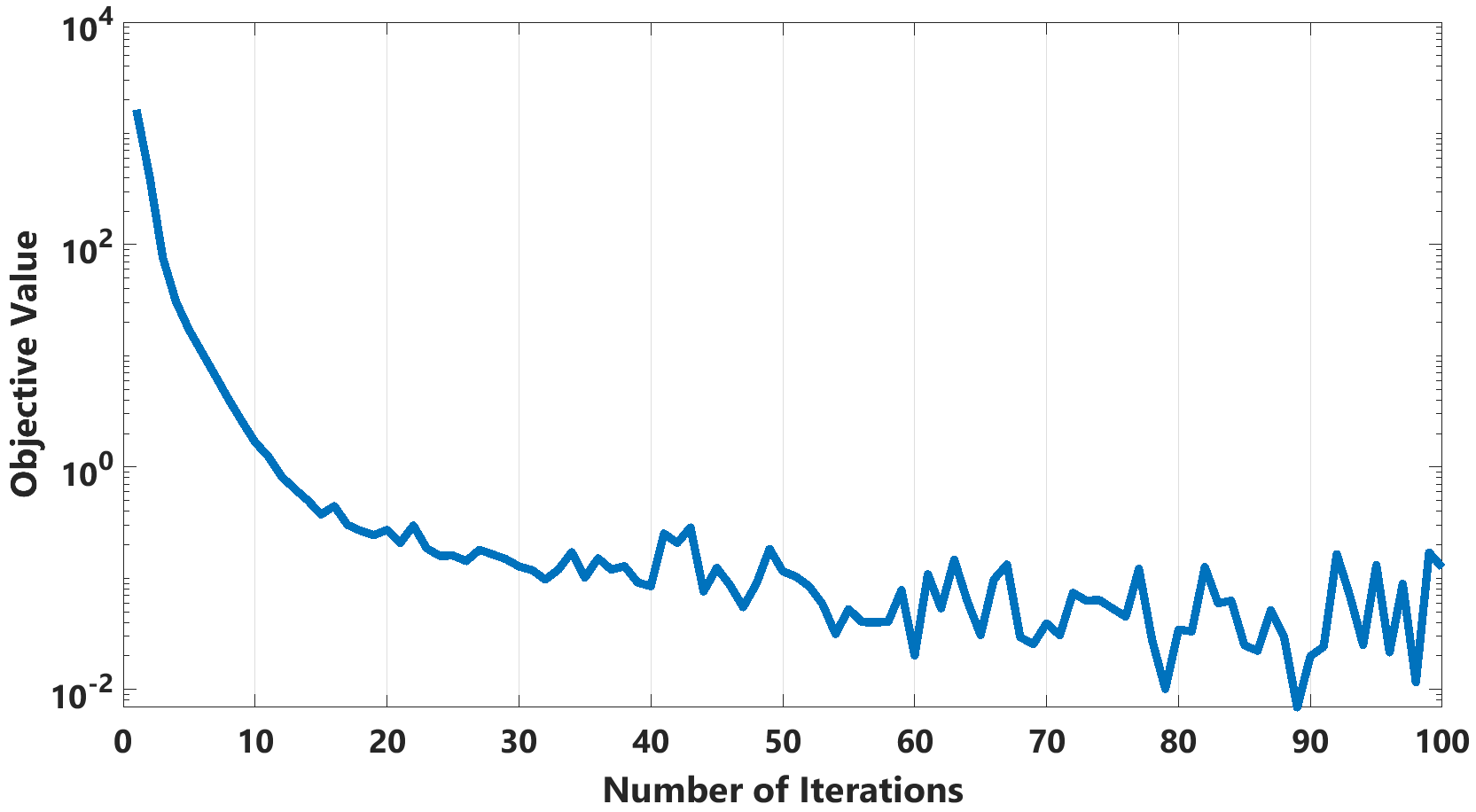}
    \caption{ \textcolor{black}{Convergence curve for TTDA on COIL-100. Objective value vs. the number of iterations is shown. }}
    \label{fig:conv}
\end{figure}

\subsection{Effect of Sample Size on Accuracy}
We also evaluated the effect of training sample size on classification accuracy for Weizmann Dataset. In Fig. \ref{fig:weizsample}, we illustrate the classification accuracy with respect to training sample size for different methods. It can be seen that 3WTTDA provides a high classification accuracy even for small training datasets, i.e., for 15 training samples it provides an accuracy of $96\%$. This is followed by CMDA and 2WTTDA. It should also be noted that DGTDA is the most sensitive to sample size as it cannot even achieve the local optima and more data allows it to learn better classifiers.
\begin{figure}
    \centering
    \includegraphics[width=.98\columnwidth]{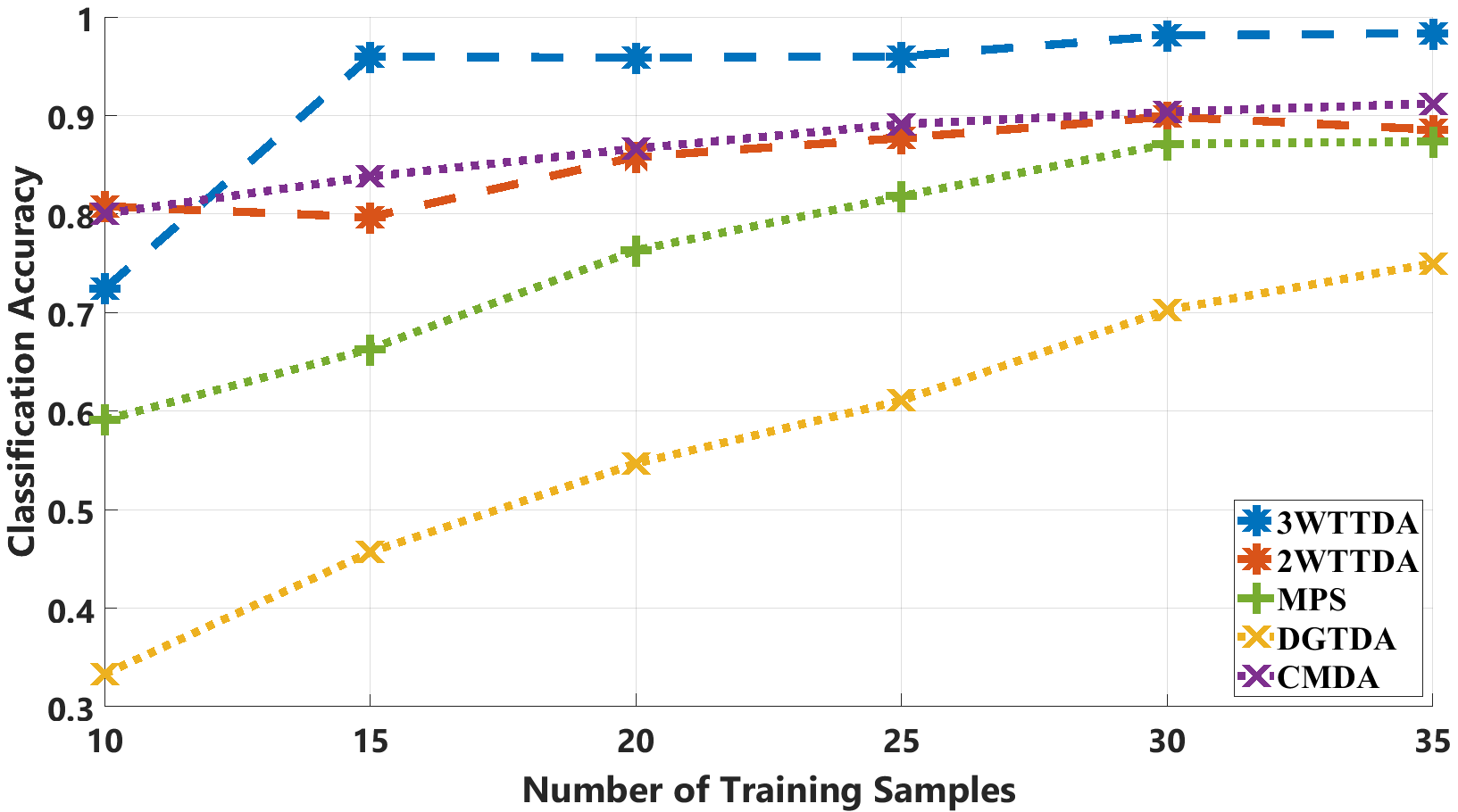}
    \caption{ \textcolor{black}{Comparison of classification accuracy vs. training sample size for Weizmann Face Dataset for different methods.} }
    \label{fig:weizsample}
\end{figure}

\subsection{Summary of Experimental Results}
\label{subsec:summary}
In Table \ref{tab:acc}, we summarize the performance of the different algorithms for the four different datasets considered in this paper. In the left half of this table, we report the classification accuracy (mean $\pm$ std) of the different methods for a fixed normalized storage cost of about $2.10^{-2}$ for COIL-100, $6.10^{-3}$ for Weizmann Face, $2.10^{-4}$ for Cambridge Hand Gesture and $10^{-3}$ for UCF-101 datasets. At the given compression rates, for all datasets the proposed 3WTTDA and 2WTTDA perform better than the other tensor based methods. In some cases, the improvement in classification accuracy is significant, e.g. for Weizmann and Cambridge data sets. These results show that the proposed method achieves the best trade-off, i.e. between normalized storage complexity and classification accuracy.

Similarly, the right half of Table \ref{tab:acc} summarizes the average training complexity for the different methods for the same normalized storage cost. From this Table, it can be seen that 3WTTDA is the most computationally efficient method for all datasets. This is followed by 2WTTDA. The difference in computational time becomes more significant as the size of the dataset increases, e.g. for UCF-101. Therefore, even if the other methods perform well for some of the datasets, the proposed methods provide higher accuracy at a computational complexity reduced by a factor of $10^{2}$.


\section{Conclusions}
In this paper, we proposed a novel approach for tensor-train based discriminant analysis for tensor object classification. The proposed approach first formulated linear discriminant analysis such that the learnt subspaces have a TT structure. The resulting framework, TTDA, reduces storage complexity at the expense of high computational complexity. This increase in computational complexity is then addressed by reshaping the  projection vector into matrices and third-order tensors, resulting in 2WTTDA and 3WTTDA, respectively. A theoretical analysis of storage and computational complexity illustrated the tradeoff between these two quantities and suggest a way to select the optimal number of modes in the reshaping of TT structure. The proposed methods were compared with the state-of-the-art TT-based subspace learning methods as well as tensor based discriminant analysis for four datasets. While providing reduced storage and computational costs, the proposed methods also yield higher or similar classification accuracy compared to state-of-the art tensor based learning methods such as CMDA, STTM, TTNPE and MPS.


 The proposed multi-branch structure can also be extended to unsupervised methods such as dictionary learning, and subspace learning applications. The structure can also be optimized by permuting the modes in a way that the dimensions are better balanced than the original order.


\bibliographystyle{IEEEtran}
\bibliography{ref}

\begin{thebibliography}{10}
\providecommand{\url}[1]{#1}
\csname url@samestyle\endcsname
\providecommand{\newblock}{\relax}
\providecommand{\bibinfo}[2]{#2}
\providecommand{\BIBentrySTDinterwordspacing}{\spaceskip=0pt\relax}
\providecommand{\BIBentryALTinterwordstretchfactor}{4}
\providecommand{\BIBentryALTinterwordspacing}{\spaceskip=\fontdimen2\font plus
\BIBentryALTinterwordstretchfactor\fontdimen3\font minus
  \fontdimen4\font\relax}
\providecommand{\BIBforeignlanguage}[2]{{%
\expandafter\ifx\csname l@#1\endcsname\relax
\typeout{** WARNING: IEEEtran.bst: No hyphenation pattern has been}%
\typeout{** loaded for the language `#1'. Using the pattern for}%
\typeout{** the default language instead.}%
\else
\language=\csname l@#1\endcsname
\fi
#2}}
\providecommand{\BIBdecl}{\relax}
\BIBdecl

\bibitem{de2000multilinear}
L.~De~Lathauwer, B.~De~Moor, and J.~Vandewalle, ``A multilinear singular value
  decomposition,'' \emph{SIAM journal on Matrix Analysis and Applications},
  vol.~21, no.~4, pp. 1253--1278, 2000.

\bibitem{sidiropoulos2017tensor}
N.~D. Sidiropoulos, L.~De~Lathauwer, X.~Fu, K.~Huang, E.~E. Papalexakis, and
  C.~Faloutsos, ``Tensor decomposition for signal processing and machine
  learning,'' \emph{IEEE Transactions on Signal Processing}, vol.~65, no.~13,
  pp. 3551--3582, 2017.

\bibitem{li2014multilinear}
Q.~Li and D.~Schonfeld, ``Multilinear discriminant analysis for higher-order
  tensor data classification,'' \emph{IEEE Transactions on Pattern Analysis and
  Machine Intelligence}, vol.~36, no.~12, pp. 2524--2537, 2014.

\bibitem{tao2005supervised}
D.~Tao, X.~Li, W.~Hu, S.~Maybank, and X.~Wu, ``Supervised tensor learning,'' in
  \emph{Fifth IEEE International Conference on Data Mining (ICDM'05)}.\hskip
  1em plus 0.5em minus 0.4em\relax IEEE, 2005, pp. 8--pp.

\bibitem{tao2007general}
D.~Tao, X.~Li, X.~Wu, and S.~J. Maybank, ``General tensor discriminant analysis
  and gabor features for gait recognition,'' \emph{IEEE Transactions on Pattern
  Analysis and Machine Intelligence}, vol.~29, no.~10, 2007.

\bibitem{yan2006multilinear}
S.~Yan, D.~Xu, Q.~Yang, L.~Zhang, X.~Tang, and H.-J. Zhang, ``Multilinear
  discriminant analysis for face recognition,'' \emph{IEEE Transactions on
  Image Processing}, vol.~16, no.~1, pp. 212--220, 2006.

\bibitem{lu2008mpca}
H.~Lu, K.~N. Plataniotis, and A.~N. Venetsanopoulos, ``Mpca: Multilinear
  principal component analysis of tensor objects,'' \emph{IEEE Transactions on
  Neural Networks}, vol.~19, no.~1, pp. 18--39, 2008.

\bibitem{kotsia2012higher}
I.~Kotsia, W.~Guo, and I.~Patras, ``Higher rank support tensor machines for
  visual recognition,'' \emph{Pattern Recognition}, vol.~45, no.~12, pp.
  4192--4203, 2012.

\bibitem{hao2013linear}
Z.~Hao, L.~He, B.~Chen, and X.~Yang, ``A linear support higher-order tensor
  machine for classification,'' \emph{IEEE Transactions on Image Processing},
  vol.~22, no.~7, pp. 2911--2920, 2013.

\bibitem{guo2016support}
X.~Guo, X.~Huang, L.~Zhang, L.~Zhang, A.~Plaza, and J.~A. Benediktsson,
  ``Support tensor machines for classification of hyperspectral remote sensing
  imagery,'' \emph{IEEE Transactions on Geoscience and Remote Sensing},
  vol.~54, no.~6, pp. 3248--3264, 2016.

\bibitem{wang2019principal}
W.~Wang, V.~Aggarwal, and S.~Aeron, ``Principal component analysis with tensor
  train subspace,'' \emph{Pattern Recognition Letters}, vol. 122, pp. 86--91,
  2019.

\bibitem{chaghazardi2017sample}
M.~Chaghazardi and S.~Aeron, ``Sample, computation vs storage tradeoffs for
  classification using tensor subspace models,'' \emph{arXiv preprint
  arXiv:1706.05599}, 2017.

\bibitem{wang2018tensor}
W.~Wang, V.~Aggarwal, and S.~Aeron, ``Tensor train neighborhood preserving
  embedding,'' \emph{IEEE Transactions on Signal Processing}, vol.~66, no.~10,
  pp. 2724--2732, 2018.

\bibitem{zafeiriou2009discriminant}
S.~Zafeiriou, ``Discriminant nonnegative tensor factorization algorithms,''
  \emph{IEEE Transactions on Neural Networks}, vol.~20, no.~2, pp. 217--235,
  2009.

\bibitem{he2014dusk}
L.~He, X.~Kong, P.~S. Yu, X.~Yang, A.~B. Ragin, and Z.~Hao, ``Dusk: A dual
  structure-preserving kernel for supervised tensor learning with applications
  to neuroimages,'' in \emph{Proceedings of the 2014 SIAM International
  Conference on Data Mining}.\hskip 1em plus 0.5em minus 0.4em\relax SIAM,
  2014, pp. 127--135.

\bibitem{guo2011tensor}
W.~Guo, I.~Kotsia, and I.~Patras, ``Tensor learning for regression,''
  \emph{IEEE Transactions on Image Processing}, vol.~21, no.~2, pp. 816--827,
  2011.

\bibitem{yu2016learning}
R.~Yu and Y.~Liu, ``Learning from multiway data: Simple and efficient tensor
  regression,'' in \emph{International Conference on Machine Learning}, 2016,
  pp. 373--381.

\bibitem{cichocki2017tensor}
A.~Cichocki, A.-H. Phan, Q.~Zhao, N.~Lee, I.~Oseledets, M.~Sugiyama, D.~P.
  Mandic \emph{et~al.}, ``Tensor networks for dimensionality reduction and
  large-scale optimization: Part 2 applications and future perspectives,''
  \emph{Foundations and Trends{\textregistered} in Machine Learning}, vol.~9,
  no.~6, pp. 431--673, 2017.

\bibitem{holtz2012manifolds}
S.~Holtz, T.~Rohwedder, and R.~Schneider, ``On manifolds of tensors of fixed
  tt-rank,'' \emph{Numerische Mathematik}, vol. 120, no.~4, pp. 701--731, 2012.

\bibitem{cichocki2016tensor}
A.~Cichocki, N.~Lee, I.~Oseledets, A.-H. Phan, Q.~Zhao, D.~P. Mandic
  \emph{et~al.}, ``Tensor networks for dimensionality reduction and large-scale
  optimization: Part 1 low-rank tensor decompositions,'' \emph{Foundations and
  Trends{\textregistered} in Machine Learning}, vol.~9, no. 4-5, pp. 249--429,
  2016.

\bibitem{orus2014practical}
R.~Or{\'u}s, ``A practical introduction to tensor networks: Matrix product
  states and projected entangled pair states,'' \emph{Annals of Physics}, vol.
  349, pp. 117--158, 2014.

\bibitem{oseledets2011tensor}
I.~V. Oseledets, ``Tensor-train decomposition,'' \emph{SIAM Journal on
  Scientific Computing}, vol.~33, no.~5, pp. 2295--2317, 2011.

\bibitem{oseledets2010approximation}
------, ``Approximation of 2\^{}d$\backslash$times2\^{}d matrices using tensor
  decomposition,'' \emph{SIAM Journal on Matrix Analysis and Applications},
  vol.~31, no.~4, pp. 2130--2145, 2010.

\bibitem{cichocki2014era}
A.~Cichocki, ``Era of big data processing: A new approach via tensor networks
  and tensor decompositions,'' \emph{arXiv preprint arXiv:1403.2048}, 2014.

\bibitem{bengua2017matrix}
J.~A. Bengua, P.~N. Ho, H.~D. Tuan, and M.~N. Do, ``Matrix product state for
  higher-order tensor compression and classification,'' \emph{IEEE Transactions
  on Signal Processing}, vol.~65, no.~15, pp. 4019--4030, 2017.

\bibitem{chen2019support}
C.~Chen, K.~Batselier, C.-Y. Ko, and N.~Wong, ``A support tensor train
  machine,'' in \emph{2019 International Joint Conference on Neural Networks
  (IJCNN)}.\hskip 1em plus 0.5em minus 0.4em\relax IEEE, 2019, pp. 1--8.

\bibitem{bengua2017efficient}
J.~A. Bengua, H.~N. Phien, H.~D. Tuan, and M.~N. Do, ``Efficient tensor
  completion for color image and video recovery: Low-rank tensor train,''
  \emph{IEEE Transactions on Image Processing}, vol.~26, no.~5, pp. 2466--2479,
  2017.

\bibitem{fukunaga2013introduction}
K.~Fukunaga, \emph{Introduction to statistical pattern recognition}.\hskip 1em
  plus 0.5em minus 0.4em\relax Elsevier, 2013.

\bibitem{yan2005discriminant}
S.~Yan, D.~Xu, Q.~Yang, L.~Zhang, X.~Tang, and H.-J. Zhang, ``Discriminant
  analysis with tensor representation,'' in \emph{Computer Vision and Pattern
  Recognition, 2005. CVPR 2005. IEEE Computer Society Conference on},
  vol.~1.\hskip 1em plus 0.5em minus 0.4em\relax IEEE, 2005, pp. 526--532.

\bibitem{wen2013feasible}
Z.~Wen and W.~Yin, ``A feasible method for optimization with orthogonality
  constraints,'' \emph{Mathematical Programming}, vol. 142, no. 1-2, pp.
  397--434, 2013.

\bibitem{wang2017efficient}
W.~Wang, V.~Aggarwal, and S.~Aeron, ``Efficient low rank tensor ring
  completion,'' in \emph{Proceedings of the IEEE International Conference on
  Computer Vision}, 2017, pp. 5697--5705.

\bibitem{oseledets2009breaking}
I.~V. Oseledets and E.~E. Tyrtyshnikov, ``Breaking the curse of dimensionality,
  or how to use svd in many dimensions,'' \emph{SIAM Journal on Scientific
  Computing}, vol.~31, no.~5, pp. 3744--3759, 2009.

\bibitem{ye2005two}
J.~Ye, R.~Janardan, and Q.~Li, ``Two-dimensional linear discriminant
  analysis,'' in \emph{Advances in Neural Information Processing Systems},
  2005, pp. 1569--1576.

\bibitem{nenecolumbia}
S.~A. Nene, S.~K. Nayar, and H.~Murase, ``Columbia object image library
  (coil-100).''

\bibitem{khoromskij2011dlog}
B.~N. Khoromskij, ``O (dlog n)-quantics approximation of n-d tensors in
  high-dimensional numerical modeling,'' \emph{Constructive Approximation},
  vol.~34, no.~2, pp. 257--280, 2011.

\bibitem{zhao2016tensor}
Q.~Zhao, G.~Zhou, S.~Xie, L.~Zhang, and A.~Cichocki, ``Tensor ring
  decomposition,'' \emph{arXiv preprint arXiv:1606.05535}, 2016.

\bibitem{weiz}
``Weizmann facebase,''
  \url{http://www.wisdom.weizmann.ac.il/~/vision/FaceBase/}.

\bibitem{kim2007tensor}
T.-K. Kim, S.-F. Wong, and R.~Cipolla, ``Tensor canonical correlation analysis
  for action classification,'' in \emph{2007 IEEE Conference on Computer Vision
  and Pattern Recognition}.\hskip 1em plus 0.5em minus 0.4em\relax IEEE, 2007,
  pp. 1--8.

\bibitem{soomro2012ucf101}
K.~Soomro, A.~R. Zamir, and M.~Shah, ``Ucf101: A dataset of 101 human actions
  classes from videos in the wild,'' \emph{arXiv preprint arXiv:1212.0402},
  2012.

\end{thebibliography}




\end{document}